  \theoremstyle{plain}
  \newtheorem{theorem}{Theorem}
  \newtheorem{lemma}{Lemma}  
  \newtheorem{corollary}[theorem]{Corollary}  
  \newtheorem{fact}{Fact}
  \theoremstyle{definition}
\title{Optimal Computation of Overabundant Words}
\author[1]{Yannis Almirantis}
\author[2]{Panagiotis Charalampopoulos}
\author[2]{Jia Gao}
\author[2]{Costas~S.~Iliopoulos}
\author[2]{Manal Mohamed}
\author[2]{Solon~P.~Pissis}
\author[3]{Dimitris Polychronopoulos}
\affil[1]{National Center for Scientific Research Demokritos, Athens, Greece\\
\texttt{yalmir@bio.demokritos.gr}}
\affil[2]{Department of Informatics, King's College London, UK\\
\texttt{[panagiotis.charalampopoulos,jia.gao,costas.iliopoulos\\manal.mohamed,solon.pissis]@kcl.ac.uk}}
\affil[3]{Computational Regulatory Genomics, Institute of Clinical Sciences (ICS), Faculty of Medicine, Imperial College London, Du Cane Road, London W12 0NN\\
\texttt{dpolychr@imperial.ac.uk}}
\date{\vspace{-5ex}}
\algnewcommand{\LineComment}[1]{\State \(\triangleright\) #1}
\tikzset{
  treenode/.style = {align=center, inner sep=2pt, text centered,
    font=\sffamily},
  arn_r/.style = {treenode, circle, black, font=\sffamily\bfseries, draw=black,
    text width=1.5em},
    arn_t/.style = {treenode, circle, black, thick, double, font=\sffamily\bfseries, draw=black,
    text width=1.5em},
  every edge/.append style={anchor=south,auto=falseanchor=south,auto=false,font=3.5 em},
}
\newcommand{\std}{\textsl{dev}}
\crefname{lem}{Lemma}{Lemmas}
\def\dd{\mathinner{.\,.}}
\newcommand{\cO}{\mathcal{O}}
 \newcommand{\defproblem}[3]{
  \vspace{2mm}
\noindent\fbox{
  \begin{minipage}{0.96\textwidth}
  #1\\
  {\bf{Input:}} #2  \\
  {\bf{Output:}} #3
  \end{minipage}
  }
  \vspace{2mm}
}
\begin{document}
\maketitle

\begin{abstract}
The observed frequency of the longest proper prefix, the longest proper suffix, and the longest infix of a word $w$ in a given sequence $x$ can be used for classifying $w$ as \textit{avoided} or \textit{overabundant}. The definitions used for the expectation and deviation of $w$ in this statistical model were described and biologically justified by Brendel et al. (J Biomol Struct Dyn 1986). We have very recently introduced a time-optimal algorithm for computing all avoided words of a given sequence over an integer alphabet (Algorithms Mol Biol 2017). In this article, we extend this study by presenting an $\mathcal{O}(n)$-time and $\mathcal{O}(n)$-space algorithm for computing all overabundant words in a sequence $x$ of length $n$ over an integer alphabet. Our main result is based on a new {\em non-trivial} combinatorial property of the suffix tree $\mathcal{T}$ of $x$: the number of distinct factors of $x$ whose longest infix is the label of an explicit node of $\mathcal{T}$ is no more than $3n-4$. We further show that the presented algorithm is time-optimal by proving that $\mathcal{O}(n)$ is a tight upper bound for the number of overabundant words. Finally, we present experimental results, using both synthetic and real data, which justify the {\em effectiveness} and {\em efficiency} of our approach in practical terms.
\end{abstract}

\section{Introduction}\label{sec:intro}

Brendel et al.~in \cite{brendel1986linguistics} initiated research into the linguistics of nucleotide sequences that focused on  the concept of words in continuous languages---languages devoid of blanks---and introduced an operational definition of words. The authors suggested a method to measure, for each possible word $w$ of length $k$, the deviation of its observed frequency $f(w)$ from the expected frequency $E(w)$ in a given sequence $x$. The observed frequency of the longest proper prefix, the longest proper suffix, and the longest infix of $w$ in $x$ were used to measure $E(w)$. The values of the deviation, denoted by $\std(w)$, were then used to identify words that are \textit{avoided} or \textit{overabundant} among all possible words of length $k$. The typical length of avoided (or of overabundant) words of the nucleotide language was found to range from 3 to 5 (tri- to pentamers). The statistical significance of the avoided words was shown to reflect their biological importance. That work, however, was based on the very limited sequence data available at the time: only DNA sequences from two viral and one bacterial genomes were considered. Also note that the range of typical word length $k$ might change when considering eukaryotic genomes, the complex dynamics and function of which are expected to impose more demanding roles to avoided or overabundant words of nucleotides. 

To this end, in~\cite{AMOB2017}, we presented an $\cO(n)$-time and $\cO(n)$-space algorithm for computing all avoided words of length $k$ in a sequence of length $n$ over a fixed-sized alphabet. For words over an integer alphabet of size $\sigma$, the algorithm requires time $\cO(\sigma n)$, which is optimal for sufficiently large $\sigma$. We also presented a time-optimal $\cO(\sigma n)$-time algorithm to compute all avoided words (of any length) in a sequence of length $n$ over an integer alphabet of size $\sigma$. We provided a tight asymptotic upper bound for the number of avoided words over an integer alphabet and the expected length of the longest one. We also proved that the same asymptotic upper bound is tight for the number of avoided words of fixed length $k$ when the alphabet is sufficiently large. The authors in~\cite{lgdrs, mslq, Apostolico:2004:VDU:987206.987210} studied a similar notion of {\em unusual words}---based on different definitions than the ones Brendel et al. use for expectation and deviation---focusing on the factors of a sequence; based on Brendel et al.'s definitions, we focus on any word over the alphabet. More recently, space-efficient detection of unusual words has also been considered~\cite{DBLP:conf/spire/BelazzouguiC15}; such avoidances is becoming an interesting line of research~\cite{Rusinov2015}. 

In this article, we wish to complement our study in~\cite{AMOB2017} by focusing on overabundant words. The motivation comes from molecular biology. Genome dynamics, i.e.~the molecular mechanisms generating random mutations in the evolving genome, are quite complex, often presenting self-enhancing features. Thus, it is expected to often give rise to words of nucleotides which will be overabundant, i.e.~being present at higher amounts than expected on the basis of their longest proper prefix, longest proper suffix, and longest infix frequencies. One specific such mechanism, which might generate overabundant words, is the following: it is well-known that in a genomic sequence of an initially random composition, the existing relatively long homonucleotide tracts present a higher frequency of further elongation than the frequency expected on the basis of single nucleotide mutations~\cite{Levinson01051987}; that is, they present a sort of autocatalytic self-elongation. This feature, in combination with the much higher frequency of transition {\em vs.} transversion mutation events, generates overabundant words which are homopurinic or homopurimidinic tracts. It is also anticipated that the overabundance of homonucleotide tracts will strongly differentiate between conserved and non-conserved parts of the genome. While this phenomenon is largely free to act within the non-conserved genomic regions, and thus it is expected to generate there large amounts of overabundant words, it is hindered in the conserved genomic regions due to selective constraints.

{\bf Our Contributions.} Analogously to avoided words~\cite{brendel1986linguistics,avoidedNAR,AMOB2017}, many different models and algorithms exist for identifying words that are in abundance in a given sequence; see for instance~\cite{Karlin,Denise:2001:ASS:645906.673108}. In this article, we make use of the biologically justified model introduced by Brendel et al.~\cite{brendel1986linguistics} and, by proving non-trivial combinatorial properties, we show that it admits {\em efficient} computation for overabundant words as well. We also present experimental results, using both synthetic and real data, which further highlight the {\em effectiveness} of this model. The computational problem can be described as follows. Given a sequence $x$ of length $n$ and a real number $\rho > 0$, compute the set of $\rho$-overabundant words, i.e.~all words $w$ for which $\std(w) \geq \rho$. We present an $\cO(n)$-time and $\cO(n)$-space algorithm for computing all $\rho$-overabundant words (of any length) in a sequence $x$ of length $n$ over an integer alphabet. This result is based on a combinatorial property of the suffix tree $\mathcal{T}$ of $x$ that we prove here: the number of distinct factors of $x$ whose longest infix is the label of an explicit node of $\mathcal{T}$ is no more than $3n-4$. We further show that the presented algorithm is time-optimal by proving that $\mathcal{O}(n)$ is a tight upper bound for the number of $\rho$-overabundant words. Finally, we pose an open question of combinatorial nature on the maximum number $\textsf{OW}(n,\sigma)$ of overabundant words that a sequence of length $n$ over an alphabet of size $\sigma>1$ can contain.

\section{Terminology and Technical Background}\label{sec:prel} 
\subsection{Definitions and Notation}
We begin with basic definitions and notation, generally following~\cite{CHL07}. Let $x=x[0]x[1]\dd x[n-1]$ be a \textit{word} of \textit{length} $n=|x|$ over a finite ordered \textit{alphabet} $\Sigma$ of size $\sigma$, i.e.~$\sigma = |\Sigma|$. In particular, we consider the case of an \textit{integer alphabet}; in this case each letter is replaced by its rank such that the resulting word consists of integers in the range $\{1,\ldots,n\}$. In what follows we assume without loss of generality that $\Sigma=\{0,1,\ldots,\sigma-1\}$. We also define $\Sigma_x$ to be the alphabet of word $x$ and $\sigma_x=|\Sigma_x|$. For two positions $i$ and $j$ on $x$, we denote by $x[i\dd j]=x[i]\dd x[j]$ the \textit{factor} (sometimes called \textit{subword}) of $x$ that starts at position $i$ and ends at position $j$ (it is empty if $j < i$), and by $\varepsilon$ the \textit{empty word}, word of length 0.  We recall that a prefix of $x$ is a factor that starts at position 0 ($x[0\dd j]$) and a suffix is a factor that ends at position $n-1$ ($x[i\dd n-1]$), and that a factor of $x$ is a \textit{proper} factor if 
it is not $x$ itself. A factor of $x$ that is neither a prefix nor a suffix of $x$ is called an $\textit{infix}$ of $x$. We denote the reverse word of $x$ by $\textsf{rev}(x)$, i.e.~$\textsf{rev}(x)=x[n-1]x[n-2]\dd x[1]x[0]$. We say that $x$ is \textit{a power} of a word $y$ if there exists a positive integer $k$, $k>1$, such that $x$ is expressed as $k$ consecutive concatenations of $y$; we denote that by $x=y^k$.

Let $w=w[0]w[1]\dd w[m-1]$ be a word, $0<m\leq n$. 
  We say that there exists an \textit{occurrence} of $w$ in $x$, or, more 
simply, that $w$ \textit{occurs in} $x$, if $w$ is a factor of $x$, which we denote by $w \preceq x$.
  Every occurrence of $w$ can be characterised by a starting position in $x$. 
  Thus we say that $w$ occurs at \textit{position} $i$ in $x$ when $w=x[i \dd i + m - 1]$. Further, let $f(w)$ denote the \textit{observed frequency}, that is, the number of occurrences of a non-empty word $w$ in word $x$. If $f(w) = 0$ for some word $w$, then $w$ is called \textit{absent} (which is denoted by $w \not\preceq x$), otherwise, $w$ is called \textit{occurring}.

By $f(w_p)$, $f(w_s)$, and $f(w_i)$ we denote the observed frequency of the longest proper prefix $w_p$, suffix $w_s$, and infix $w_i$ of $w$ in $x$, respectively. 
We can now define the \textit{expected frequency} of word $w$, $|w|>2$, in $x$ as in Brendel et al.~\cite{brendel1986linguistics}:  
\begin{equation} \label{eq:1}
E(w) =  \frac {f(w_p) \times f(w_s)}{f(w_i)}, \text{ if~ } f(w_i) >0;  \text{~else~}  E(w) = 0.
\end{equation}
The above definition can be explained intuitively as follows. Suppose we are given $f(w_p)$, $f(w_s)$, and $f(w_i)$. Given an occurrence of $w_i$ in $x$, the probability of it being preceded by $w[0]$ is $\frac {f(w_p)}{f(w_i)}$ as $w[0]$ precedes exactly $f(w_p)$ of the $f(w_i)$ occurrences of $w_i$. Similarly,  this occurrence of $w_i$ is also an occurrence of $w_s$ with probability $\frac {f(w_s)}{f(w_i)}$. Although these two events are not always independent, the product $\frac {f(w_p)}{f(w_i)} \times \frac {f(w_s)}{f(w_i)}$ gives a good approximation of the probability that an occurrence of $w_i$ at position $j$ implies an occurrence of $w$ at position $j-1$. It can be seen then that by multiplying this product by the number of occurrences of $w_i$ we get the above formula for the expected frequency of $w$.

\noindent Moreover, to measure the deviation of the observed frequency of a word $w$ from its expected frequency in $x$, we define the {\it deviation} ($\chi^2$ test) of $w$ as:
\begin{equation} \label{eq:2}
 \std(w) = \frac {f(w)-E(w)}{\max\{ \sqrt {E(w)}, 1\}}.
\end{equation}
\noindent For more details on the {\it biological} justification of these definitions see~\cite{brendel1986linguistics} and~\cite{AMOB2017}.

\pgfmathdeclarefunction{gauss}{2}{%
  \pgfmathparse{1/(#2*sqrt(2*pi))*exp(-((x-#1)^2)/(2*#2^2))}%
}

\pgfmathdeclarefunction{gauss2}{2}{%
  \pgfmathparse{1/(#2*sqrt(2*pi))*exp(-((x-#1)^2)/(2*#2^2))*3/2}%
}

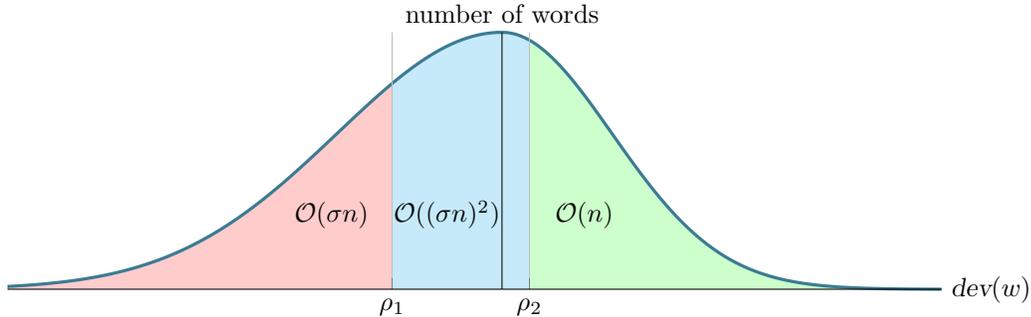
\begin{figure}
\begin{center}
\begin{tikzpicture}
\begin{axis}[
  no markers, domain=-9:8, samples=100,
  axis lines*=center, xlabel=$\std(w)$, ylabel=number of words,
  every axis y label/.style={at=(current axis.above origin),anchor=south},
  every axis x label/.style={at=(current axis.right of origin),anchor=west},
  height=5cm, width=14cm,
  xtick={-2,0.5},
  xticklabels={$\rho_1$,$\rho_2$},
  ytick=\empty,
  enlargelimits=false, clip=false, axis on top,
  grid = major
  ]
  \addplot [fill=red!20, draw=none, domain=-9:-2] {gauss2(0,3)} \closedcycle;
    \addplot [fill=cyan!20, draw=none, domain=-2:0] {gauss2(0,3)} \closedcycle;
      \addplot [fill=cyan!20, draw=none, domain=0:0.5] {gauss(0,2)} \closedcycle;
  \addplot [fill=green!20, draw=none, domain=0.5:8] {gauss(0,2)} \closedcycle;
  \addplot [very thick,cyan!50!black][domain=0:8] {gauss(0,2)};
    \addplot [very thick,cyan!50!black][domain=-9:0] {gauss2(0,3)};

    \node[above] at (590,40) {$\cO(\sigma n)$};
    \node[above] at (800,40) {$\cO((\sigma n)^2)$};
	\node[above] at (1050,40) {$\cO(n)$};

\end{axis}
\end{tikzpicture}
\end{center}
\caption{For a word $x$, the words for which $\std(w)$ is defined are the ones of the form $w=aub$, where $u$ is a factor of $x$ and $a,b \in \Sigma$, not necessarily distinct. There are $\cO(n^2)$ distinct factors in a word of length $n$ and for each of these we obtain $\sigma ^2$ words of this form. We have shown that the $\rho_1$-avoided words are $\cO(\sigma n)$~\cite{AMOB2017}. In this article, we show that the $\rho_2$-overabundant ones are $\cO(n)$.}
\label{fig:bell}
\end{figure}

Using the above definitions and two given thresholds, we can classify a word $w$ as either \textit{avoided}, \textit{common}, or \textit{overabundant} in $x$. In particular, for  two given thresholds $\rho_1 < 0$ and $\rho_2 > 0$, a word $w$ is called $\rho_1$-\textit{avoided} if $\std(w) \leq \rho_1$, $\rho_2$-\textit{overabundant} if $\std(w) \geq \rho_2$, and $(\rho_1,\rho_2)$-\textit{common} otherwise (see Figure~\ref{fig:bell}). We have very recently shown that the number of $\rho_1$-avoided words is $\mathcal{O}(\sigma n)$, and have introduced a time-optimal algorithm for computing all of them in a given sequence over an integer alphabet~\cite{AMOB2017}. In this article, we show that the number of $\rho_2$-overabundant words is $\cO(n)$, and study the following computational problem. 

{\defproblem{\textsc{AllOverabundantWordsComputation}}{A word $x$ of length $n$ and a real number $\rho > 0$}{All $\rho$-overabundant words in $x$}}


\subsection{Suffix Trees}
In our algorithms, suffix trees are used  extensively as computational tools. For a general introduction to suffix trees see~\cite{CHL07}.

The \textit{suffix tree} $\mathcal{T}(x)$ of a non-empty word $x$ of length $n$ is a compact trie representing all suffixes of $x$. The nodes of the trie which become nodes of the suffix
tree are called {\it explicit} nodes, while the other nodes are called {\it implicit}. Each edge
of the suffix tree can be viewed as an upward maximal path of implicit nodes starting with an explicit node. Moreover, each node belongs to a unique path of that kind. Then, each node of the trie can be represented in the suffix tree by the edge it belongs to and an index within the corresponding path.

We use $\mathcal{L}(v)$ to denote the \textit{path-label} of a node $v$, i.e., the concatenation of the edge labels along the path from the root to $v$. We say that $v$ is  path-labelled $\mathcal{L}(v)$. Additionally, $\mathcal{D}(v)= |\mathcal{L}(v)|$ is used to denote the \textit{word-depth} of node $v$. Node $v$ is a \textit{terminal} node  if and only if $\mathcal{L}(v) = x[i \dd n-1]$, $0 \leq i < n$; here $v$ is also labelled with index $i$. It should be clear that each occurring word $w$ in $x$ is uniquely represented by either an explicit or an implicit node of $\mathcal{T}(x)$. The \textit{suffix-link} of a node $v$ with path-label $\mathcal{L}(v)= \alpha y$ is a pointer to the node path-labelled $y$, where $\alpha \in \Sigma$ is a single letter and $y$ is a word. The suffix-link of $v$ exists if $v$ is a non-root internal node of $\mathcal{T}(x)$. 

In any standard implementation of the suffix tree, we assume that each node of the suffix tree is able to access its parent. Note that once $\mathcal{T}(x)$ is constructed, it can be traversed in a depth-first manner to compute the word-depth $\mathcal{D}(v)$ for each node $v$. Let $u$ be the parent of $v$. Then the word-depth $\mathcal{D}(v)$ is computed by adding  $\mathcal{D}(u)$ to the length of the label of edge $(u,v)$. If $v$ is the root then $\mathcal{D}(v) = 0$. Additionally, a depth-first traversal of $\mathcal{T}(x)$ allows us to count,  for  each  node $v$, the number of terminal nodes in the subtree rooted at $v$, denoted by $\mathcal{C}(v)$, as follows. When internal node $v$  is visited, $\mathcal{C}(v)$ is computed by adding up $\mathcal{C}(u)$ of all the nodes $u$, such that $u$ is a child of $v$, and then $\mathcal{C}(v)$ is incremented by 1 if $v$ itself is a terminal node. If a node $v$ is a leaf  then $\mathcal{C}(v) = 1$.

We assume that the terminal nodes of $\mathcal{T}(x)$ have suffix-links as well. 
We can either store them while building $\mathcal{T}(x)$ 
or just traverse it once and construct an array $\textit{node}[0 \dd n-1]$ such that $\textit{node}[i]=v$ if $\mathcal{L}(v)=x[i \dd n-1]$. 
We further denote by $\textsc{Parent}(v)$ the parent of a node $v$ in $\mathcal{T}(x)$ and by $\textsc{Child}(v,\alpha)$ the explicit node that is obtained from $v$ by traversing the outgoing edge whose label starts with $\alpha \in \Sigma$. A batch of $q$ $\textsc{Child}(v,\alpha)$ queries can be answered off-line in time $\cO(n+q)$ for a word $x$ over an integer alphabet (via radix sort).

\section{Combinatorial Properties}\label{Section-Properties}

In this section, we prove some properties that are useful for designing the time-optimal algorithm presented in the next section. 

\begin{fact}\label{fac:common}
Given a word $x$ of length $n$ over an alphabet of size $\sigma$, the number of words $w$ for which $\std(w)$ is defined is $\mathcal{O}((\sigma n)^2)$.
\end{fact}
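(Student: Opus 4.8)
The plan is to reduce the count to the number of distinct factors of $x$ multiplied by $\sigma^2$, exactly along the lines indicated in the caption of \cref{fig:bell}. First I would identify the words $w$ for which $\std(w)$ is (non-trivially) defined, i.e.\ the only words that $x$ can possibly classify as avoided or overabundant. For $|w|>2$ write $w=w[0]w[1]\dd w[m-1]$ with $m=|w|$, and let $w_p=w[0\dd m-2]$, $w_s=w[1\dd m-1]$, $w_i=w[1\dd m-2]$ be its longest proper prefix, suffix, and infix. If $w_i\not\preceq x$, then $w\not\preceq x$ as well, so $f(w)=f(w_i)=0$; by \eqref{eq:1} the degenerate branch gives $E(w)=0$, and then \eqref{eq:2} yields $\std(w)=0$, so $w$ is $(\rho_1,\rho_2)$-common and may be ignored. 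Hence it suffices to bound the number of words $w$ with $|w|>2$ whose longest infix $w_i$ occurs in $x$.

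Now I would parametrise each such word: setting $a=w[0]\in\Sigma$, $b=w[m-1]\in\Sigma$, and $u=w_i$, we have $w=aub$, where $u$ is a factor of $x$, and conversely each triple $(a,u,b)$ gives at most one word $w=aub$. Since a word of length $n$ has at most $\binom{n}{2}+n+1=\cO(n^2)$ distinct factors (a non-empty factor being determined by its start and end positions, plus the empty word), and for each factor $u$ there are $\sigma$ choices for $a$ and $\sigma$ for $b$, the total number of such words is at most $\sigma^2\cdot\cO(n^2)=\cO((\sigma n)^2)$, as claimed. This can in fact be sharpened to $\cO((\sigma_x n)^2)$ by the same reasoning applied to $w_p$ and $w_s$ — if $a\notin\Sigma_x$ or $b\notin\Sigma_x$ then $f(w_p)=0$ or $f(w_s)=0$, again forcing $\std(w)=0$ — but this refinement is not needed here.

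There is essentially no obstacle: the only step requiring a line of justification is the initial reduction, namely observing that a word whose longest infix is absent from $x$ is pushed into the $E(w)=0$ branch of \eqref{eq:1} and hence has $\std(w)=0$; once the relevant words are pinned down to the form $aub$ with $u\preceq x$, the bound follows from the elementary fact that $x$ has $\cO(n^2)$ distinct factors.
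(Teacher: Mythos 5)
Your proposal is correct and follows essentially the same route as the paper: identify the relevant words as those of the form $aub$ with $u$ a non-empty factor of $x$ and $a,b\in\Sigma$, then multiply the $\cO(n^2)$ bound on distinct factors by the $\sigma^2$ choices of flanking letters. The only difference is cosmetic --- you treat the $w_i\not\preceq x$ case as ``defined but trivially zero'' rather than undefined, and note an unneeded $\sigma_x$ refinement --- neither of which changes the counting argument.
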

\begin{proof}
For a word $w$ over $\Sigma$, $\std(w)$ is only defined if $w_i \preceq x$. Hence the words $w$ for which $\std(w)$ is defined are of the form $aub$ for some non-empty $u \preceq x$ and $a,b \in \Sigma$. For each distinct factor $u \neq \varepsilon$ of $x$ there are $\sigma ^2$ words of the form $aub$, $a,b \in \Sigma$. Since there are $\cO(n^2)$ distinct factors in a word of length $n$, the fact follows.
\end{proof}

\begin{fact}\label{abs}
Every word $w$ that does not occur in $x$ and for which $\std(w)$ is defined has $\std(w) \leq 0$.
\end{fact}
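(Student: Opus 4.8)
The plan is to argue directly from the definitions in~\eqref{eq:1} and~\eqref{eq:2}, so essentially no machinery is needed. First I would observe that if $w$ does not occur in $x$ then, by definition of the observed frequency, $f(w)=0$. Substituting $f(w)=0$ into~\eqref{eq:2} yields
\[
\std(w) \;=\; \frac{0 - E(w)}{\max\{\sqrt{E(w)},\,1\}} \;=\; \frac{-E(w)}{\max\{\sqrt{E(w)},\,1\}}.
\]

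Next I would check that $E(w)\ge 0$. Since $\std(w)$ is defined we have $|w|>2$, so $w_i$ is a non-empty word, and by the convention used throughout (cf.\ the proof of \cref{fac:common}) ``defined'' entails $w_i \preceq x$, i.e.\ $f(w_i)>0$; hence the first branch of~\eqref{eq:1} applies and $E(w)=f(w_p)\,f(w_s)/f(w_i)$. As $f(w_p)$ and $f(w_s)$ are non-negative occurrence counts and $f(w_i)>0$, this quotient is non-negative; and in the degenerate branch $E(w)=0$ in any case. Either way $E(w)\ge 0$, so the numerator $-E(w)\le 0$, while the denominator $\max\{\sqrt{E(w)},1\}\ge 1>0$. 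Therefore $\std(w)\le 0$, which is exactly the claim.

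There is genuinely no obstacle here; the only points requiring a moment's care are the two-case structure of the definition of $E(w)$ and the convention that $\std(w)$ being ``defined'' presupposes $|w|>2$ and $f(w_i)>0$, so that all quantities involved are legitimate non-negative reals. I would also remark (though it is not needed for the statement) that the bound is tight for absent words: such a $w$ attains $\std(w)=0$ precisely when $E(w)=0$, i.e.\ when $f(w_p)=0$ or $f(w_s)=0$, which fits the intuition that the model cannot flag $w$ as overabundant when it does not even ``expect'' $w$ to appear.
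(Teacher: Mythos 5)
Your argument is correct and follows essentially the same route as the paper's proof: note that $f(w)=0$ for an absent word and $E(w)\geq 0$ by the definition of expected frequency, so the deviation formula gives $\std(w)\leq 0$. Your version merely spells out the two-case structure of the definition of $E(w)$ in more detail, which the paper leaves implicit.
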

\begin{proof}
For such a word we have that $E(w) \geq 0$ and that $f(w)=0$ and hence
$\std(w)=\frac{f(w)-E(w)}{\max\{{\sqrt{E(w)}, 1}\}} \leq 0$.
\end{proof}

\noindent {\it Na\"{\i}ve algorithm}. By using Fact~\ref{abs}, we can compute $\std(w)$, for each factor $w$ of $x$, thus solving Problem \textsc{AllOverabundantWordsComputation}. There are $\mathcal{O}(n^2)$
such factors, however, which make this computation inefficient.

\begin{fact}\label{fact:impl}
Given a factor $w$ of a word $x$, if $w_i$ corresponds to an implicit node in the suffix tree $\mathcal{T}(x)$, then so does  $w_p$.
\end{fact}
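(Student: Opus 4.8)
The plan is to reduce the statement to the standard characterization of which occurring words label explicit nodes of $\mathcal{T}(x)$. Write $w=w[0]w[1]\dd w[m-1]$ with $m=|w|$. The algebraic observation that drives everything is that $w_p=w[0\dd m-2]$ and $w_i=w[1\dd m-2]$, so $w_p=w[0]\cdot w_i$; in particular $w_i$ is a suffix of $w_p$, and both $w_p$ and $w_i$ occur in $x$ since $w$ does. If $w_i$ labels an implicit node then $w_i\neq\varepsilon$ (the root is explicit), which forces $m\geq 3$, so the cases $m\leq 2$ are vacuous and may be discarded at the outset.

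Next I would invoke the following fact about suffix trees, under the convention adopted here that every word of the form $x[i\dd n-1]$ labels an (explicit) terminal node: a non-empty occurring word $z$ labels an implicit node of $\mathcal{T}(x)$ if and only if $z$ is not a suffix of $x$ and there is exactly one letter $b\in\Sigma$ with $zb\preceq x$ (a ``right extension'' of $z$); indeed, the explicit nodes are precisely the root, the branching nodes, and the terminal nodes. Applying the ``only if'' direction to $w_i$ gives that $w_i$ is not a suffix of $x$ and that its unique right extension is $c:=w[m-1]$ (it is a right extension because $w_i\,w[m-1]=w[1\dd m-1]\preceq x$, and it is the only one by the characterization).

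Finally I would verify the two conditions of the ``if'' direction for $w_p$. First, $w_p$ is not a suffix of $x$: otherwise its length-$(m-2)$ suffix, which is exactly $w_i$, would itself be a suffix of $x$, contradicting the previous step. Second, $w_p$ has exactly one right extension: any $b$ with $w_p\,b\preceq x$ satisfies $w_i\,b\preceq x$ since $w_i$ is a suffix of $w_p$, hence $b=c$; and $c$ is a right extension of $w_p$ because $w_p\,c=w\preceq x$. Since also $w_p\neq\varepsilon$, the characterization yields that $w_p$ labels an implicit node. The only place that needs care is the characterization of implicit nodes together with the paper's convention on terminal nodes, which is exactly why the ``$w_p$ is not a suffix of $x$'' check is required; but that check is immediate once one notes that a suffix of a suffix of $x$ is again a suffix of $x$. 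Equivalently, one can phrase the whole argument through suffix links: since $w_p=w[0]\cdot w_i$, if $w_p$ labelled an explicit node then following its suffix link would exhibit $w_i$ as an explicit node (suffix links of explicit nodes lead to explicit nodes, in both the internal and, under these conventions, the terminal case), contradicting the hypothesis.
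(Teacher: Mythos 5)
Your proof is correct and follows essentially the same route as the paper: both rest on the characterization of implicit nodes via right extensions (the paper phrases it as ``every occurrence is followed by the same unique letter'', which bundles your ``not a suffix of $x$'' and ``exactly one right extension'' checks into one condition) together with the observation that $w_p=w[0]\cdot w_i$, so the property transfers from $w_i$ to $w_p$. Your extra care about the suffix-of-$x$ case and the alternative suffix-link phrasing are just more explicit renderings of the same argument.
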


\begin{proof}
A factor $w'$ of $x$ corresponds to an implicit node $\mathcal{T}(x)$ if and only if every occurrence of it in $x$ is followed by the same unique letter $b \in \Sigma$.
Hence, since $w_p=aw_i$ for some $a \in \Sigma$, if $w_i$ is always followed by, say, $b \in \Sigma$, every occurrence of $w_p$ in $x$ must also always be followed 
by $b$. Thus $w_p$ corresponds to an implicit node as well.
\end{proof}

\begin{lemma}\label{expl}
If $w$ is a factor of a word $x$ and $w_i$ corresponds to an implicit node in $\mathcal{T}(x)$, then $\std(w)=0$.
\end{lemma}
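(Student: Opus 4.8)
The plan is to prove the stronger statement $E(w)=f(w)$; the lemma is then immediate, since by~\eqref{eq:2} we get $\std(w)=\frac{f(w)-E(w)}{\max\{\sqrt{E(w)},1\}}=0$. Write $w=aw_ib$ with $a,b\in\Sigma$, so that $w_p=aw_i$ and $w_s=w_ib$. Since an implicit node has a non-empty path-label, $|w_i|\geq 1$ and hence $|w|>2$, so \eqref{eq:1} and~\eqref{eq:2} apply; moreover $w\preceq x$ forces $w_i\preceq x$, so $f(w_i)>0$ and $E(w)$ is computed via the first branch of~\eqref{eq:1}.

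The heart of the argument is the characterisation recalled in the proof of Fact~\ref{fact:impl}: a factor of $x$ corresponds to an implicit node of $\mathcal{T}(x)$ exactly when all of its occurrences in $x$ are followed by one and the same letter. Applying this to $w_i$: every occurrence of $w_i$ is followed by a single letter, and because $w_s=w_ib$ occurs in $x$, that letter must be $b$. Hence the occurrences of $w_i$ are in bijection with the occurrences of $w_ib=w_s$ (match an occurrence of $w_i$ at position $j$ with the occurrence of $w_s$ at $j$), so $f(w_i)=f(w_s)$. The same reasoning handles $w_p$: by Fact~\ref{fact:impl} it is implicit as well, its unique following letter is again $b$ since $w=w_pb$ occurs, and therefore $f(w_p)=f(w_pb)=f(w)$. (Directly: an occurrence of $w_p=aw_i$ at position $j$ contains an occurrence of $w_i$ at $j+1$, which is followed by $b$, i.e.\ $x[j+|w_p|]=b$, so $w=w_pb$ occurs at $j$.)

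Plugging these two identities into~\eqref{eq:1} gives $E(w)=\frac{f(w_p)f(w_s)}{f(w_i)}=\frac{f(w)\,f(w_s)}{f(w_s)}=f(w)$, completing the proof. The one point I would be careful about is the passage from ``$w_i$ is implicit'' to ``every occurrence of $w_i$ is followed by $b$'': this relies on no occurrence of $w_i$ sitting at the very end of $x$ with no letter after it, which is precisely what distinguishes an implicit node from an explicit/terminal one — and it is exactly this fact that makes the two frequency equalities exact rather than merely the approximation motivating~\eqref{eq:1}.
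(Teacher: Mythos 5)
Your proof is correct and follows essentially the same route as the paper: both establish the key identities $f(w_i)=f(w_s)$ and $f(w_p)=f(w)$ (the paper via the ``implicit node has the same frequency as the explicit node below it'' view of the edge, you via the same-right-extension characterisation already used in the proof of Fact~\ref{fact:impl}), and then conclude $E(w)=f(w)$, hence $\std(w)=0$. Your closing remark about occurrences at the end of $x$ is a fair point of care, but it is resolved by the paper's terminal-node/end-marker convention, which the paper's own proof relies on in the same way.
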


\begin{proof}
If a word $w' \preceq x$ corresponds to an implicit node along the edge $(u, v)$ in $\mathcal{T}(x)$ and $\mathcal{L} (v)=w$ then the number of 
occurrences of $w'$ in $x$ is equal to that of $w$. 

If $w_i$ corresponds to an implicit node on edge $(u, v)$ it follows immediately that $f(w_i)=f(w_s)$, as either $w_s$ also corresponds to an implicit node in the same edge or $w_s=\mathcal{L} (v)$. 
In addition, from Fact~\ref{fact:impl} we have that $w_p$ is an implicit node as well and it similarly follows that $f(w_p)=f(w)$. We thus have $E(w)=\frac{f(w_p) \times f(w_s)}{f(w_i)}=f(w)$
and hence $\std(w)=\frac{f(w)-E(w)}{\max\{{\sqrt{E(w)}, 1}\}}=0$.
\end{proof}

Based on these properties, the aim of the algorithm in the next section is to find the factors of $x$ whose longest infix corresponds to an explicit node and check if they are $\rho$-overabundant. More specifically, for each explicit node $v$ in $\mathcal{T}(x)$, such that $\mathcal{L} (v)=y$, we aim at identifying the factors of $x$ that have $y$ as their longest infix (i.e.~factors of the form $ayb$, $a,b \in \Sigma$). We will do that by identifying the factors of $x$ that have $y$ as their longest proper suffix (i.e.~factors of the form $ay$, $a \in \Sigma$) and then checking for each of these the different letters that succeed it in $x$. Then we can check in time $\mathcal{O}(1)$ if each of these words is $\rho$-overabundant.

Note that the algorithm presented in Section~\ref{Section-Algorithm} is fundamentally different and in a sense more involved than the one presented in~\cite{AMOB2017} for the computation of {\em occurring} $\rho$-avoided words (note that a $\rho$-avoided word can be {\em absent}). This is due to the fact that for occurring $\rho$-avoided words we have the stronger property that $w_p$ must correspond to an explicit node.

\begin{theorem}\label{revbound}
Given a word $x$ of length $n$, the number of distinct factors of $x$ of the form $a y b$, where $a, b \in \Sigma$ and $y \neq \varepsilon$ is the label of an explicit node of $\mathcal{T}(x)$, is no more than $3n-2-2\sigma_x$.
\end{theorem}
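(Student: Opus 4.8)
The plan is to bound separately the two ingredients that make up a factor of the form $ayb$: the ``left extension'' $ay$ and, for each such left extension, the number of admissible right letters $b$. Fix an explicit node $v$ of $\mathcal{T}(x)$ with $\mathcal{L}(v)=y\neq\varepsilon$. Call $y$ \emph{left-extendable} by $a\in\Sigma$ if $ay\preceq x$, and for such an $ay$ let $r(ay)$ denote the number of distinct letters $b$ with $ayb\preceq x$. Then the quantity we want to bound is $\sum_{v \text{ explicit}, \mathcal{L}(v)\neq\varepsilon}\ \sum_{a:\, ay\preceq x} r(ay)$. The first key observation is that $r(ay)$ equals the number of children (plus possibly a terminal marker) of the node representing $ay$ in the suffix tree $\mathcal{T}(x)$ of $x$ \emph{if $ay$ is an explicit node}, and $r(ay)=1$ if $ay$ is implicit (it is followed by a unique letter). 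So I would split the sum according to whether the word $ay$ is explicit or implicit in $\mathcal{T}(x)$.

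For the implicit case the contribution is just the number of pairs $(v,a)$ with $\mathcal{L}(v)=y$, $ay\preceq x$, and $ay$ implicit; since each such word $ay$ is in bijection (via prepending $a$) with a word whose locus sits strictly inside an edge of $\mathcal{T}(x)$, and since the suffix link of the locus of $ay$ lands on $v$, I would like to charge each such pair to the edge of $\mathcal{T}(x)$ on which $ay$ lies. The second, cleaner route — and the one I would actually push — is to pass to the suffix tree $\mathcal{T}(\textsf{rev}(x))$ of the reverse word, exactly as the theorem's name ``revbound'' hints. Reversing turns ``$y$ is a label of an explicit node of $\mathcal{T}(x)$'' into ``$\textsf{rev}(y)$ is a label of an explicit node of $\mathcal{T}(\textsf{rev}(x))$'' (an explicit node corresponds to a factor that is either right-branching or a suffix; under reversal this becomes left-branching or a prefix, which is precisely the branching condition in the reverse tree, up to the prefix/suffix boundary terms that will account for the $-2\sigma_x$ correction). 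A factor $ayb$ of $x$ corresponds to a factor $b\,\textsf{rev}(y)\,a$ of $\textsf{rev}(x)$, and the word $\textsf{rev}(y)\,a = \textsf{rev}(ay)$ is then a one-letter \emph{descendant} direction out of the explicit node labelled $\textsf{rev}(y)$. Thus the count becomes: sum over explicit nodes $u$ of $\mathcal{T}(\textsf{rev}(x))$ with nonempty label, of the number of (child-edge-first-letter, further-one-letter-extension) pairs hanging below $u$ — equivalently, over edges of $\mathcal{T}(\textsf{rev}(x))$ emanating from explicit nodes, counting the number of distinct letters that can follow the \emph{second} letter of that edge's word.

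Once in this reformulation, I would argue as follows. For each explicit internal node $u$ of $\mathcal{T}(\textsf{rev}(x))$ and each child edge $e$ of $u$, look at the second character position on $e$: the number of distinct continuing letters there is $1$ if that position is still strictly inside $e$ (non-branching), and it is the number of children of the far endpoint of $e$ if the edge has length exactly $1$ (so the far endpoint is reached after one letter). Summing ``$1$ per child edge of each explicit node'' over all explicit nodes already counts the total number of edges of $\mathcal{T}(\textsf{rev}(x))$, which is at most $2n-2$; the extra amount contributed by length-$1$ edges telescopes into (number of edges below depth-$1$-children) which is again controlled by the total edge count. Combining, the total is at most $(\text{edges}) + (\text{edges below a shallow layer}) \le 2n-2 + (n - \text{something})$, and bookkeeping the terminal-node/prefix/suffix boundary cases yields the claimed $3n-2-2\sigma_x$; the $-2\sigma_x$ comes from the $\sigma_x$ edges out of the root of each of the two trees, whose labels are single letters $a$ and give factors $ab$ whose ``$y$'' is empty and are therefore excluded.

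The main obstacle, I expect, is not the gross counting — $2n-2$ edges is standard — but (i) getting the reversal dictionary exactly right at the boundary, i.e.\ correctly matching ``explicit node of $\mathcal{T}(x)$'' (right-branching \emph{or} suffix) with the corresponding condition in $\mathcal{T}(\textsf{rev}(x))$ (left-branching \emph{or} prefix) and tracking the handful of words that are prefixes/suffixes of $x$ themselves, since these are precisely where the constant $-2$ and the $-2\sigma_x$ are fine-tuned; and (ii) setting up a clean charging scheme so that each distinct factor $ayb$ is counted once and charged to a unique structural feature (an edge, or a child of a depth-one node) of $\mathcal{T}(\textsf{rev}(x))$ without double counting across different explicit nodes $v$. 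I would handle (ii) by charging $ayb$ to the \emph{edge} of $\mathcal{T}(\textsf{rev}(x))$ that carries the letter $a$ immediately below the node labelled $\textsf{rev}(y)$, and then separately charging the ``extra'' extensions $b$ (those arising only when that edge has length one) to edges one level deeper, arguing that the deeper edges used are pairwise distinct and disjoint from the first family. The arithmetic then reduces to: (first family) $\le$ number of non-root edges, (second family) $\le$ number of edges whose parent is a non-root explicit node at the far end of a length-one edge, and the two together with the root-edge exclusions give the bound.
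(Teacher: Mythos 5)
Your proof hinges on the claimed dictionary ``$y$ is the label of an explicit node of $\mathcal{T}(x)$ if and only if (up to boundary terms) $\textsf{rev}(y)$ is the label of an explicit node of $\mathcal{T}(\textsf{rev}(x))$'', and this is false. Explicitness of $y$ in $\mathcal{T}(x)$ means $y$ is right-branching in $x$ or a suffix of $x$; explicitness of $\textsf{rev}(y)$ in $\mathcal{T}(\textsf{rev}(x))$ means $y$ is \emph{left}-branching in $x$ or a \emph{prefix} of $x$. These are genuinely different conditions, and the discrepancy is not an $\mathcal{O}(\sigma_x)$ boundary correction. For example, in $x=\texttt{cacad}$ the factor $y=\texttt{a}$ is explicit in $\mathcal{T}(x)$ (it is followed by \texttt{c} and by \texttt{d}) but always preceded by \texttt{c} and not a prefix, so $\textsf{rev}(y)$ is an implicit node of $\mathcal{T}(\textsf{rev}(x))$; the words $\texttt{cac}$ and $\texttt{cad}$ would then be attached to an explicit node of the reverse tree that does not exist, and your reformulated sum misses them. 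Taking $x=(\texttt{ca})^m\texttt{d}$, every factor $\texttt{a}(\texttt{ca})^j$ with $j\leq m-2$ has this property, so $\Theta(n)$ of the nodes (and $\Theta(n)$ of the words $ayb$) fall outside your reformulation; no tuning of the constants $-2$ and $-2\sigma_x$ can absorb this. There is a second, related direction error: the letter $b$ following $ay$ in $x$ corresponds to a letter \emph{preceding} $\textsf{rev}(y)a$ in $\textsf{rev}(x)$, i.e.\ to a left-extension of that node, not to a ``further one-letter extension hanging below $u$''; what your edge-based count in $\mathcal{T}(\textsf{rev}(x))$ actually measures is the number of two-letter right-extensions of its explicit nodes, which reversed back are words of the form $dcy'$ --- a different family from the $ayb$ of the theorem. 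Your opening decomposition (fix $y$, sum $r(ay)$ over left-extension letters $a$, with $r(ay)=1$ when $ay$ is implicit) is fine, but the passage to the reverse tree is where the argument breaks.

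For contrast, the paper's proof never relocates the explicitness condition: it keeps ``$y$ explicit in $\mathcal{T}(x)$'' on the original tree and decomposes the other way round, by the pair $(y,b)$ rather than $(a,y)$. The words $yb$ are in bijection with the edges of $\mathcal{T}(x)$ leaving non-root explicit nodes, hence there are at most $2n-2-\sigma_x$ of them; then, for each fixed $yb$, the number of admissible letters $a$ is read off $\mathcal{T}(\textsf{rev}(x))$ as the out-degree of the locus of $\textsf{rev}(yb)$ (at most $\deg(u)$ if that locus is an internal explicit node $u$, at most $1$ if it is implicit, $0$ if it is a leaf). Since the map $yb\mapsto\textsf{rev}(yb)$ is injective, a degree-sum bound over $\mathcal{T}(\textsf{rev}(x))$ ($\sum_u \deg(u)\leq n+m-\sigma_x$ over the $m$ non-root internal explicit nodes, plus $1$ per implicit image) yields $3n-2-2\sigma_x$. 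If you want to salvage your $(a,y)$-first decomposition, you would need to bound $\sum_y \#\{a: ay\preceq x\}$ by mapping each explicit $y$ to the locus of $\textsf{rev}(y)$ in the reverse tree \emph{whatever its type} (explicit, implicit, or leaf) and charging degrees there, plus separately bounding $\sum \mathrm{outdeg}(ay)$ over explicit loci $ay$ in $\mathcal{T}(x)$ --- which is essentially the paper's argument with the roles of the two trees interleaved, not the edge count you sketched.
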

\begin{proof}
Let $S$ be the set of all explicit or implicit nodes in $\mathcal{T}(x)$ of the form $yb$ such that $y$ is represented by an explicit node other than the root. We have at most $2n-2-\sigma_x$ of them; there are at most $2n-2$ edges in $\mathcal{T}(x)$, but $\sigma_x$ of them are outgoing from the root. For such a word $yb$, the number of factors of $x$ of the form $ayb$ is equal to the degree of the node representing $\textsf{rev}(yb)$ in $\mathcal{T}(\textsf{rev}(x))$. 

For every node in $S$, we obtain a distinct node in $\mathcal{T}(\textsf{rev}(x))$. Let us suppose that $k_1$ of these nodes are non-root internal explicit nodes, $k_2$ are leaves, and the rest $2n-2-\sigma_x-k_1-k_2$ are implicit nodes. Each internal explicit node $u$ contributes at most $deg(u)$, each leaf contributes $0$, and each implicit node contributes at most $1$. 

Hence the number of such words would be maximised if we obtained all the non-root internal explicit nodes and no leaves.
Let $\mathcal{T}(\textsf{rev}(x))$ have $m$ non-root internal explicit nodes. The resulting upper bound then is $\sum_{u \in \mathcal{T}(\textsf{rev}(x))\setminus \{root\}}^{} {deg(u)}+(2n-2-\sigma_x-m) \leq n+m-\sigma_x+(2n-2-\sigma_x-m)=3n-2-2\sigma_x$.

Note that $\sum_{u \in \mathcal{T}(\textsf{rev}(x))\setminus \{root\}}^{} {deg(u)}\leq n+m-\sigma_x$ since there are at most $n$ edges from explicit internal nodes to leaves and $m$ edges to other internal nodes; $\sigma_x$ of these are outgoing from the root.
\end{proof}

\begin{corollary}\label{lem:3n}
The $\rho$-overabundant words in a word $x$ of length $n$ are at most $3n-2-2\sigma_x$.
\end{corollary}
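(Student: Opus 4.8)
The plan is to show that every $\rho$-overabundant word of $x$ lies in the exact class of factors counted in Theorem~\ref{revbound}, so that the bound $3n-2-2\sigma_x$ carries over without change.

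First I would pin down the shape of a $\rho$-overabundant word $w$. By definition $\std(w) \geq \rho > 0$, so $\std(w)$ is in particular defined; hence $|w| > 2$ and $w_i \preceq x$. Writing $w = a\, w_i\, b$ with $a = w[0]$, $b = w[|w|-1] \in \Sigma$, the longest infix $w_i = w[1 \dd |w|-2]$ is a non-empty factor of $x$. Next, by Fact~\ref{abs}, any word that does not occur in $x$ and has $\std$ defined satisfies $\std(w) \leq 0 < \rho$; so $w$ must in fact occur in $x$, i.e.\ $w$ is itself a factor of $x$ of the form $a y b$ with $y = w_i \neq \varepsilon$.

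Then I would invoke Lemma~\ref{expl}: if the longest infix $w_i$ corresponds to an implicit node of $\mathcal{T}(x)$, then $\std(w) = 0 < \rho$, contradicting $\rho$-overabundance. Hence $y = w_i$ must be the label of an explicit node of $\mathcal{T}(x)$, and since $y \neq \varepsilon$ that node is not the root. Putting the two steps together, every $\rho$-overabundant word of $x$ is a distinct factor of $x$ of the form $a y b$ with $a,b \in \Sigma$ and $y \neq \varepsilon$ the label of an explicit node of $\mathcal{T}(x)$, so Theorem~\ref{revbound} bounds their number by $3n-2-2\sigma_x$.

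I do not anticipate a genuine obstacle here: the statement is a direct consequence of Theorem~\ref{revbound} once the two filtering arguments are in place. The only point requiring care is to check that these two filters — ruling out absent words via Fact~\ref{abs}, and ruling out words whose longest infix is an implicit node via Lemma~\ref{expl} — together leave precisely the set already enumerated in Theorem~\ref{revbound}, after which the arithmetic is inherited verbatim.
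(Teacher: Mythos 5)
Your proposal is correct and follows essentially the same route as the paper: use Fact~\ref{abs} to restrict to occurring factors, Lemma~\ref{expl} to force the longest infix to be an explicit node, and then apply Theorem~\ref{revbound}. The paper additionally remarks (by symmetry) that $\textsf{rev}(w_i)$ must be explicit in $\mathcal{T}(\textsf{rev}(x))$, but that extra restriction is not needed for the bound, so your slightly leaner argument is fine.
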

\begin{proof}
By Fact~\ref{abs}, Lemma~\ref{expl}, and symmetry, it follows that the $\rho$-overabundant words in $x$ are factors of $x$ of the form $a y b$, where $a, b \in \Sigma$, such that $y \neq \varepsilon$ is represented by an explicit node in $\mathcal{T}(x)$ and $\textsf{rev}(y)$ represented by an explicit node in $\mathcal{T}(\textsf{rev}(x))$. Hence they are a subset of the set of words considered in Theorem~\ref{revbound}.
\end{proof}

\begin{lemma} 
The $\rho$-overabundant words in a word $x$ of length $n$ over a binary alphabet (e.g. $\Sigma=\{\texttt{a}, \texttt{b}\}$) are no more than $2n-4$.
\end{lemma}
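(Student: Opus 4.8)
The plan is to combine the structural reduction behind \cref{lem:3n} with a binary-specific sign analysis of the $\chi^2$ statistic, and then to count the relevant infixes through the suffix tree. By Fact~\ref{abs}, \cref{expl} and the reversal-invariance of $\std$ (exactly as in the proof of \cref{lem:3n}), every $\rho$-overabundant word has the form $ayb$ with $a,b\in\Sigma$ and $y\neq\varepsilon$ such that $y$ labels an explicit node of $\mathcal{T}(x)$ and $\textsf{rev}(y)$ labels an explicit node of $\mathcal{T}(\textsf{rev}(x))$; since $|ayb|=|y|+2$ this also forces $y$ to be a proper factor of $x$. So it suffices to bound, summing over these admissible infixes $y$, the number of letter pairs $(a,b)$ for which $ayb$ occurs and $\std(ayb)\geq\rho$.

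The key binary observation is that for a fixed infix $y$ there are at most two admissible $(a,b)$. Write $N=\big(f(iyj)\big)_{i,j\in\Sigma}$, a $2\times2$ matrix. Then $f(iy)$, $f(yj)$ and $f(y)$ equal the row sums, column sums and total sum of $N$ up to additive corrections in $\{0,1\}$ coming from the occurrence of $y$ at position $0$ (present iff $y$ is a prefix of $x$) and at position $n-|y|$ (present iff $y$ is a suffix of $x$). A short computation gives $g(i,j):=f(iyj)f(y)-f(iy)f(yj)$ with $\sum_{i,j\in\Sigma}g(i,j)=-[\,y\text{ a prefix of }x\,]\cdot[\,y\text{ a suffix of }x\,]\le 0$, and, away from these corrections, $g(i,j)=\pm\det N$ with the usual chessboard sign pattern. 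As $\std(iyj)\geq\rho>0$ forces $g(i,j)>0$, this yields: at most \emph{two} of the four words $iyj$ are $\rho$-overabundant unless $y$ is simultaneously a prefix and a suffix of $x$, in which case at most three can be; moreover if $y$ is right-special but not left-special (or vice versa) at most one is, and if $y$ is neither, none is.

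It remains to count the admissible infixes. Over a binary alphabet the suffix tree of a length-$n$ word (built with a sentinel, and assuming $x$ uses both letters) has at most $n-2$ internal non-root nodes, and every admissible $y$ labels such a node of $\mathcal{T}(x)$; the ``at most two'' bound already gives $\le 2(n-2)=2n-4$, \emph{except} for the contribution of the bispecial borders of $x$ — factors that are at once a prefix, a suffix, right-special and left-special — each of which may carry three overabundant words. To absorb this surplus I would use the admissibility condition on $\textsf{rev}(y)$ as well: applying the $n-2$ bound to $\mathcal{T}(x)$ controls $|B_0|+|P'|+|T_R|$ and applying it to $\mathcal{T}(\textsf{rev}(x))$ controls $|B_0|+|Q'|+|T_L|$, where $B_0$ is the set of bispecial factors, $P'$ (resp.\ $Q'$) the prefix-special (resp.\ suffix-special) ones, and $T_R$ (resp.\ $T_L$) the suffixes (resp.\ prefixes) of $x$ that label explicit nodes but have a unique right- (resp.\ left-) extension; adding, $2|B_0|+|P'|+|Q'|+|T_R|+|T_L|\le 2n-4$. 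Since each bispecial border lies in neither $T_R$ nor $T_L$ yet occurs at least three times in $x$, the remaining task — and the main obstacle — is an injective charging of the bispecial borders into $T_R\sqcup T_L$, exploiting that the borders of $x$ are nested and that right-/left-extension counts are monotone along the chains of suffixes/prefixes of $x$; once $|\{\text{bispecial borders}\}|\le|T_R|+|T_L|$ is established the claimed bound $2n-4$ follows. I expect the binary determinant computation to be routine and this boundary bookkeeping to be where the real work lies.
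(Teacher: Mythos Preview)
Your sign/determinant analysis --- the computation of $g(i,j)=f(iyj)f(y)-f(iy)f(yj)$ and the conclusion that at most two of the four $iyj$ can be $\rho$-overabundant, at most three when $y$ is simultaneously a prefix and a suffix of $x$ --- is correct and is exactly the combinatorial core of the paper's proof. Where you diverge from the paper is in the counting, and that is where your proposal stalls.

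You attempt to bound the total by $2\cdot|\{\text{admissible }y\}|$ plus an excess of $+1$ for each bispecial border, and then propose to absorb this excess via an injective charging of bispecial borders into $T_R\sqcup T_L$. You explicitly flag this charging as ``the main obstacle'' and do not carry it out; as written the argument is incomplete, and the set-inequality you need ($|\{\text{bispecial borders}\}|\le |T_R|+|T_L|$) is not obvious.

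The paper avoids this detour entirely with a much simpler bookkeeping device. The key observation (which you in fact proved) is that the ``three'' case can only arise when $y$ is a \emph{suffix} of $x$, i.e.\ when the node $u$ with $\mathcal{L}(u)=y$ is \emph{terminal} in $\mathcal{T}(x)$. So instead of a uniform bound of $2$ with border exceptions, set
\[
deg'(u)\;=\;deg(u)\;+\;[\,u\text{ is terminal}\,],
\]
and your own sign analysis immediately gives: the number of $\rho$-overabundant words with longest infix $\mathcal{L}(u)$ is at most $deg'(u)$ (the cases $deg'(u)=2$ and $deg'(u)=3$ are exactly the two cases of your computation). Now $\sum_{u} deg'(u)$ over non-root internal nodes of $\mathcal{T}(x)$ is precisely the number of edges of $\mathcal{T}(x\$)$ whose tail is not the root; since $\mathcal{T}(x\$)$ has $n+1$ leaves and its root has degree $3$ over a binary alphabet, this sum is at most $2n-4$. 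No separate treatment of borders, no charging, no use of $\mathcal{T}(\textsf{rev}(x))$ is needed.

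In short: right key lemma, but you have made the counting harder than it is. The paper's $deg'$ trick localises your global ``$+1$ for borders'' to the terminal flag on the suffix-tree node and finishes with a one-line edge count.
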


\begin{proof}
For every internal explicit node $u$ of $\mathcal{T}(x)$, other than the root, let $deg'(u)$ be $deg(u)+1$ if node $u$ is terminal and $deg(u)$ otherwise. The sum of $deg'(u)$ over the internal explicit non-root nodes of $\mathcal{T}(x)$ is no more than $2n-4$ (ignoring the case when $x={\alpha}^n, \alpha \in \Sigma$).
We will show that, for each such node, the $\rho$-overabundant words with $w_i=\mathcal{L}(u)$ as their longest proper infix are at most $deg'(u)$.

\begin{itemize}
\item \emph{Case I: $deg'(u)=2$.}
\begin{itemize}
\item \emph{Subcase 1: $deg(u)=1$.}
Node $u$ is terminal and it has an edge with label $\alpha$. We can then have at most 2 $\rho$-overabundant words with $w_i$ as their longest proper infix: $\texttt{a}w_i \alpha$ and $\texttt{b}w_i \alpha$.
\item \emph{Subcase 2: $deg(u)=2$.}
Node $u$ is not terminal and it has an edge with label $\texttt{a}$ and an edge with label $\texttt{b}$.
If only one of $\texttt{a}w_i$ and $\texttt{b}w_i$ occurs in $x$ we are done. If both of them occur in $x$ we argue as follows (irrespective of whether $w_i$ is also a prefix of $x$):

If $\texttt{a}w_i \texttt{a}$ is $\rho$-overabundant, then

$f(\texttt{a}w_i\texttt{a})-f(\texttt{a}w_i) \times f(w_i \texttt{a})/f(w_i) \geq \rho >0 \Rightarrow f(\texttt{a}w_i\texttt{a})/f(\texttt{a}w_i) > f(w_i \texttt{a})/f(w_i) \Leftrightarrow 1-f(\texttt{a}w_i\texttt{a})/f(\texttt{a}w_i) < 1-f(w_i \texttt{a})/f(w_i) \Leftrightarrow f(\texttt{a}w_i\texttt{b})/f(\texttt{a}w_i) < f(w_i \texttt{b})/f(w_i) \Leftrightarrow f(\texttt{a}w_i\texttt{b})-f(\texttt{a}w_i) \times f(w_i \texttt{b})/f(w_i) < 0$

and hence $\texttt{a}w_i\texttt{b}$ is not $\rho$-overabundant. (Similarly for $\texttt{b}w_i \texttt{a}$ and $\texttt{b}w_i \texttt{b}$.)
\end{itemize}

\item \emph{Case II: $deg'(u)=3$.}
Node $u$ is terminal and it has an edge with label $\texttt{a}$ and an edge with label $\texttt{b}$. If only one of $\texttt{a}w_i$ and $\texttt{b}w_i$ occurs in $x$ or if both of them occur in $x$, but $w_i$ is not a prefix of $x$, we can have at most 2 $\rho$-overabundant words with $w_i$ as the proper longest infix; this can be seen by looking at the node representing $\textsf{rev}(w_i)$ in $\mathcal{T}(\textsf{rev}(x))$, which falls in \emph{Case I}.

So we only have to consider the case where both $\texttt{a}w_i$ and $\texttt{b}w_i$ occur in $x$ and $w_i$ is a prefix of $x$. For this case, we assume without loss of generality that $\texttt{a} w_i$ is a suffix of $x$.
If $\texttt{a}w_i \texttt{a}$ is $\rho$-overabundant, then

$f(\texttt{a}w_i\texttt{a})-f(\texttt{a}w_i) \times f(w_i \texttt{a})/f(w_i) \geq \rho >0 \Rightarrow f(\texttt{a}w_i\texttt{a})/f(\texttt{a}w_i) > f(w_i \texttt{a})/f(w_i) \Leftrightarrow 1-f(\texttt{a}w_i\texttt{a})/f(\texttt{a}w_i) < 1-f(w_i \texttt{a})/f(w_i) \Leftrightarrow (f(\texttt{a}w_i\texttt{b})+1)/f(\texttt{a}w_i) < (f(w_i \texttt{b})+1)/f(w_i) \Rightarrow 
f(\texttt{a}w_i\texttt{b})/f(\texttt{a}w_i) < (f(w_i \texttt{b})/f(w_i) \Leftrightarrow
f(\texttt{a}w_i\texttt{b})-f(\texttt{a}w_i) \times f(w_i \texttt{b})/f(w_i) < 0$

and hence $\texttt{a}w_i\texttt{b}$ is not $\rho$-overabundant. Thus in this case we can have at most $3=deg'(u)$ $\rho$-overabundant words.
\end{itemize}

We can thus have at most $deg'(u)$ $\rho$-overabundant words for each internal explicit non-root node of $\mathcal{T}(x)$. This concludes the proof.
\end{proof}

\begin{lemma}\label{lem:opt}
The $\rho$-overabundant words in a word of length $n$ are $\mathcal{O}(n)$ and this bound is tight.
There exists a word over the binary alphabet with $2n-6$ $\rho$-overabundant words.
\end{lemma}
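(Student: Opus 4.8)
The plan is to split the statement into two parts: the $\cO(n)$ upper bound and a matching lower-bound construction. The upper bound is already essentially in hand: by Corollary~\ref{lem:3n} the number of $\rho$-overabundant words is at most $3n-2-2\sigma_x=\cO(n)$, so nothing further is needed there. The substance of the lemma is the lower bound, i.e.\ exhibiting, for infinitely many $n$, a binary word $x$ of length $n$ with at least $2n-6$ distinct $\rho$-overabundant words. Since $\rho$ can be taken arbitrarily small (the statement quantifies over a fixed but arbitrary $\rho>0$, or we may first pick the word and then a small enough $\rho$), the real task is to build a word in which $\approx 2n$ distinct factors $aub$ satisfy $f(aub) > E(aub)$, i.e.\ $f(aub)\cdot f(u) > f(au)\cdot f(ub)$.

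First I would look for a highly structured word where the occurrence counts are easy to compute exactly, and where strict superadditivity $f(aub)f(u) > f(au)f(ub)$ holds for almost all length-$(\vert u\vert+2)$ factors. A natural candidate is a word built from a long homopurinic/homonucleotide run together with a small amount of ``glue'', echoing the biological motivation in the introduction — for instance $x = \texttt{a}^{k}\texttt{b}$, or a word of the form $\texttt{b}\,\texttt{a}^{k}\,\texttt{b}$, or $\texttt{a}^{k}\texttt{b}\texttt{a}^{k}$, and similar short-description families. For such words the factors are $\texttt{a}^{i}$ and a handful of factors containing one or two $\texttt{b}$'s, so $f$ is given by a closed form (e.g.\ $f(\texttt{a}^i) = k-i+1$ in $\texttt{a}^k\texttt{b}\cdots$), and one can check the overabundance inequality factor-by-factor. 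The counting goal is to show that all but a constant number of the $\cO(n)$ candidate factors $aub$ (with $u$ a factor and $a,b\in\{\texttt{a},\texttt{b}\}$ whose longest infix is an explicit-node label on both sides) are overabundant, landing exactly on $2n-6$. I would compute the counts, verify the inequality for the generic case (typically the factors $\texttt{a}^{i}$ with a few $\texttt{b}$'s attached), and isolate the $\cO(1)$ boundary cases that fail, tuning the constant offset ($-6$) to match.

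The main obstacle I anticipate is twofold. First, getting the constant exactly right: Corollary~\ref{lem:3n} and the binary refinement ($2n-4$) give the ceiling, and the construction must come within an additive constant of it, so I will need to account precisely for which nodes of $\mathcal{T}(x)$ and $\mathcal{T}(\textsf{rev}(x))$ are explicit, which are terminal, and the small set of factors where $E(w)\ge f(w)$ (e.g.\ factors that are prefixes or suffixes of $x$, where the ``$+1$'' corrections in the Case~II analysis of the previous lemma bite). Second, verifying $\std(w)\ge\rho$ rather than merely $\std(w)>0$: since $E(w)$ can grow with $n$ while $f(w)-E(w)$ may be only a small constant (or even sub-constant) for some factors, I must choose the family so that the gap $f(w)-E(w)$ is bounded below by a positive constant independent of $n$ on all the $2n-6$ target factors, and then take $\rho$ no larger than that constant; alternatively, reparametrise so that $\rho$ is given first and absorb it by scaling the construction. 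Once the occurrence counts are pinned down this is a finite case check, but it is where the care is needed.
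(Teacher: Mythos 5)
Your plan is essentially the paper's proof: the upper bound is quoted from Corollary~\ref{lem:3n}, and the lower bound comes from an explicit run-based word --- the paper uses exactly one of your listed candidates, $x=ba^{n-2}b$, computing $f(w_p)=f(w)=1$, $f(w_s)=n-k-1$, $f(w_i)=n-k$ for the $2n-6$ prefixes $ba^k$ and suffixes $a^kb$ with $2\le k\le n-2$, giving $\std(w)=\tfrac{1}{n-k}>0$. One caveat: your preferred route of making the gap $f(w)-E(w)$ bounded below by a constant independent of $n$ is not attainable for this construction (the deviations go down to $\tfrac{1}{n-2}$), so you must take the fallback you already mention of choosing $\rho$ after the word --- the paper fixes $0<\rho<1/n$ --- rather than a single $\rho$ working for all $n$.
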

\begin{proof}
The asymptotic bound follows directly from Corollary~\ref{lem:3n}.
The tightness of the asymptotic bound can be seen by considering word $x=ba^{n-2}b$, $a,b \in \Sigma$, of length $n$ and some $\rho$ such that $0<\rho<1/n$.
Then for every prefix $w$ of $x$ of the form $ba^k$ or for every suffix $w$ of $x$ of the form $a^kb$, $2 \leq k \leq n-2$, we have that $f(w_p)=f(w)=1$, $f(w_s)=n-k-1$, and $f(w_i)=n-k$. Hence for any $w$ we have
$\std(w)=1-\frac{1 \times (n-k-1)}{n-k}=\frac{1}{n-k}>\rho$.
For instance, for $w=ba^{n-2}$, we have $\std(w)=1/2$. There are $2n-6=\Omega(n)$ such factors and hence at least these many $\rho$-overabundant words in $x$.
\end{proof}

\begin{corollary}\label{coro:common}
The $(\rho_1,\rho_2)$-common words in a word of length $n$ over an alphabet of size $\sigma$ are $\mathcal{O}((\sigma n)^2)$.
\end{corollary}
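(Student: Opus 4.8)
The plan is short: I would simply exploit the fact that the avoided/common/overabundant trichotomy is, by definition, only applied to words $w$ for which $\std(w)$ is defined. A word $w$ is classified as $(\rho_1,\rho_2)$-common precisely when $\std(w)$ is defined and $\rho_1 < \std(w) < \rho_2$; in particular, the set of $(\rho_1,\rho_2)$-common words of $x$ is a subset of the set of all words $w$ for which $\std(w)$ is defined, i.e.\ the words of the form $aub$ with $a,b\in\Sigma$ and $u\neq\varepsilon$ a factor of $x$. By Fact~\ref{fac:common} the latter set has size $\mathcal{O}((\sigma n)^2)$, and any subset of a set of that size has size $\mathcal{O}((\sigma n)^2)$, which gives the corollary.

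There is no real obstacle here — the only point that needs to be stated carefully is the convention just mentioned, namely that the classification is meaningful only for words in the scope of Fact~\ref{fac:common}, so that Corollary~\ref{coro:common} is an immediate consequence of that fact together with the definition of $(\rho_1,\rho_2)$-common. One could additionally remark that, unlike the $\mathcal{O}(\sigma n)$ bound for avoided words and the $\mathcal{O}(n)$ bound for overabundant words, this quadratic bound genuinely reflects the worst case: choosing $\rho_1$ and $\rho_2$ far enough apart pushes essentially every word with $\std$ defined into the common class, and a word with $\Theta(n^2)$ distinct factors (for instance a suitable word over an alphabet of size $\Theta(\sqrt n)$) then contributes $\Theta((\sigma n)^2)$ such words; but this refinement is not required for the statement as phrased.
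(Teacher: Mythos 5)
Your proposal is correct and takes essentially the same route as the paper: both rest entirely on Fact~\ref{fac:common}, which bounds the number of words with $\std$ defined by $\mathcal{O}((\sigma n)^2)$, and observe that the $(\rho_1,\rho_2)$-common words are among these. The paper additionally notes the $\mathcal{O}(\sigma n)$ bound for $\rho_1$-avoided and $\mathcal{O}(n)$ bound for $\rho_2$-overabundant words before concluding, but that subtraction does not affect the asymptotic bound, so your plain subset argument is an equivalent (if anything, slightly leaner) version of the same proof.
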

\begin{proof}
By Fact~\ref{fac:common} we know that $\std(w)$ is defined for $\mathcal{O}((\sigma n)^2)$ words. The $\rho_1$-avoided ones are $\cO(\sigma n)$~\cite{AMOB2017}, while the $\rho_2$-overabundant are $\cO(n)$ by Corollary~\ref{lem:3n}. Hence the  $(\rho_1,\rho_2)$-common words are $\mathcal{O}((\sigma n)^2)$.
\end{proof}

\section{Algorithm}\label{Section-Algorithm}


Based on Fact~\ref{abs} and Lemma~\ref{expl} all $\rho$-overabundant words of a word $x$ are factors of $x$ of the form $a y b$, where $a, b \in \Sigma$ and $y$ is the label of an explicit node of $\mathcal{T}(x)$. It thus suffices to consider these words and check for each of them whether it is $\rho$-overabundant. We can find the ones that have their longest proper prefix represented by an explicit node in $\mathcal{T}(x)$ easily, by taking the suffix-link from that node during a traversal of the tree. To find the ones that have their longest proper prefix represented by an implicit node we use the following fact, which follows directly from the definition of the suffix-links of the suffix tree.

\begin{fact}\label{subtree}
Suppose $aw$, where $a \in \Sigma$ and $w \in \Sigma ^*$, is a factor of a word $x$. Further suppose that $w$ is represented by an explicit node $v$ in $\mathcal{T} (x)$, while $aw$ by an implicit node along the edge $(u_1,u_2)$ in $\mathcal{T}(x)$. Then, 
the suffix-link from $u_2$ points to a node in the subtree of $\mathcal{T}(x)$ rooted at $v$.
\end{fact}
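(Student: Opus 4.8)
The plan is to unfold the definition of suffix-links and to check that the proper-prefix relation between $w$ and $\mathcal{L}(u_2)$ is transferred, after stripping the leading letter, into a (proper) descendant relation between $v$ and the suffix-link target of $u_2$.

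First I would identify $\mathcal{L}(u_2)$. Since $aw$ is represented by an implicit node \emph{along} the edge $(u_1,u_2)$, that implicit node cannot coincide with the explicit node $u_2$; hence $aw$ is a \emph{proper} prefix of $\mathcal{L}(u_2)$. In particular $\mathcal{L}(u_2)$ starts with the letter $a$, so we may write $\mathcal{L}(u_2)=awz$ for some non-empty word $z$, and set $w':=wz$, a proper extension of $w=\mathcal{L}(v)$.

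Next I would invoke the conventions of Section~\ref{sec:prel}: $u_2$ is the lower endpoint of an edge in whose interior $aw$ lies, so $u_2$ is not the root, and its suffix-link is therefore defined (either as a non-root internal node, or as a terminal node, for which suffix-links are assumed available). By the definition of suffix-links this pointer leads to the explicit node $v'$ with $\mathcal{L}(v')=w'$.

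Finally I would argue that $v'$ lies in the subtree rooted at $v$: both $w$ and $w'$ are factors of $x$ (as $aw\preceq x$ and $aw'=\mathcal{L}(u_2)\preceq x$), and $w$ is a prefix of $w'$, so tracing the path from the root of the compacted trie $\mathcal{T}(x)$ one passes through the locus of $w$, namely the explicit node $v$, before reaching the locus of $w'$, namely $v'$; hence $v'$ is a descendant of $v$ (a proper one, since $z\neq\varepsilon$), which is the claim. The whole argument is bookkeeping with the definitions and I do not expect a genuine obstacle; the one step deserving a careful sentence is the first one — verifying that the implicit node for $aw$ is interior to the edge and not $u_2$ itself — since this is exactly what makes $aw$ a \emph{proper} prefix of $\mathcal{L}(u_2)$ and thereby forces the root-to-$w'$ path to pass through $v$.
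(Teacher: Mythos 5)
Your proposal is correct and is exactly the intended argument: the paper gives no separate proof, simply noting that the fact ``follows directly from the definition of the suffix-links,'' and your write-up (that $aw$ being interior to $(u_1,u_2)$ makes it a proper prefix of $\mathcal{L}(u_2)=awz$, so the suffix-link target, labelled $wz$, lies below the explicit locus $v$ of $w$) is the straightforward unfolding of that remark. No gaps.
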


\alglanguage{pseudocode}
\begin{algorithm}[H]
\small
\caption{Compute all $\rho$-overabundant words}
\label{Algorithm:overabundant}
\begin{algorithmic}[1]
\Procedure{$\textsc{ComputeOverabundantWords}$}{word $x$, real number $\rho$}
    \State $\mathcal{T}(x) \leftarrow$ \textsc{BuildSuffixTree}($x$)
    \For{\mbox{each node  $v \in \mathcal{T}(x)$} }
        \State{$\mathcal{D}(v)\leftarrow \mbox{word-depth of }v$}
		\State{$\mathcal{C}(v)\leftarrow \mbox{number of terminal nodes in the subtree rooted at }v$}
	\EndFor
    \For {each  node $v \in \mathcal{T}(x)$} \Comment \textit{prefix node}
            \LineComment{Report $\rho$-overabundant words $w$ such that $w_p$ is explicit} 
 	 \State $u  \leftarrow \textit{suffix-link}[v]$  \Comment \textit{infix node}
     \If{$\mathcal{D}(v)>1$ \textbf{and} \textsc{IsInternal}($v$)}
            \State $f_p \leftarrow \mathcal{C}(v)$, $f_i \leftarrow  \mathcal{C}(u)$
              \If {$f_i > f_p$ \textbf{and} $u \neq \textsc{Root}(\mathcal{T}(x)$)} 
             	\For {each child $y$ of node $v$} 
                	\If {\textbf{not}(\textsc{IsTerminal}$(y)$ \text{~\bf{and}~} $\mathcal{D}(y) = \mathcal{D}(v)+1$)}
                	  	\State $f_w \leftarrow \mathcal{C}(y)$
   						\State $\alpha \leftarrow \mathcal{L}(y)[\mathcal{D}(v)+1]$
						\State $f_s \leftarrow \mathcal{C}(\Call{Child}{u,\alpha})$
						\State $E \leftarrow f_p\times f_s/f_i$
						\If{ $(f_w-E)/(\max\{1,\sqrt{E}\}) \geq \rho$} 
									\State \textsc{Report}($\mathcal{L}(y)[0\dd \mathcal{D}(v)]$)
						\EndIf
                	  \EndIf
                \EndFor
      		\EndIf
           \EndIf
                \LineComment{Report $\rho$-overabundant words $w$ such that $w_p$ is implicit} 
                \For {each child $y$ of node $v$} 
                	  \If {$\mathcal{D}(y) > \mathcal{D}(v)+1$ }
                	  	\If {\textsc{IsInternal}($y$)}
                	  		\State $z  \leftarrow \textit{suffix-link}[y]$
                	  	\Else \Comment \textit{$y$ is a  terminal node}
                	  		 \State $i  \leftarrow \textit{label}[y]$
                	  		 \State $z  \leftarrow \textit{node}[i+1]$  
                             \If {$\mathcal{D}(z) = \mathcal{D}(\textsc{Parent}(z))+1$ }
                             	\State {$z \leftarrow \textsc{Parent}(z)$}
                             \EndIf
                	  \EndIf
                	  \State $f_w \leftarrow  f_p \leftarrow \mathcal{C}(y)$
                	  \While{$\textsc{Parent}(z) \neq u$}
                	  	
                       	 	\State $f_i \leftarrow \mathcal{C}(\textsc{Parent}(z))$
							\State $f_s \leftarrow \mathcal{C}(z)$
							\State $E \leftarrow f_p\times f_s/f_i$
							\If{ $(f_w-E)/(\max\{1,\sqrt{E}\}) \geq \rho$} 
								\State \textsc{Report}($\mathcal{L}(y)[0 \dd \mathcal{D}[\textsc{Parent}(z)]+1]$)
							\EndIf
						\State {$z \leftarrow \textsc{Parent}(z)$}
					\EndWhile
                	\EndIf
                \EndFor
    \EndFor
\EndProcedure
\Statex
\end{algorithmic}
  \vspace{-0.4cm}%
  \label{algo:1}
\end{algorithm}

\clearpage

The algorithm first builds the suffix tree of word $x$, which can be done in time and space $\mathcal{O}(n)$ for words over an integer alphabet~\cite{farach1997optimal}. 
It is also easy to compute $\mathcal{D}(v)$ and $\mathcal{C}(v)$, for each node $v$ of $\mathcal{T} (x)$, within the same time complexity (lines $2-5$ in Algorithm~\ref{algo:1}).

The algorithm then performs a traversal of $\mathcal{T}(x)$.
When it first reaches a node $v$, it considers $\mathcal{L}(v)$ as a potential longest proper prefix of $\rho$-overabundant words---i.e.~$\mathcal{L}(v)=w_p=a w_i$, where $a \in \Sigma$. 
By following the suffix-link to node $u$, which represents the respective $w_i$, and based on the first letter of the label of each outgoing edge $(v,q)$ from $v$, it computes the deviation for all possible factors of $x$ of the form $w_p b$, where $b \in \Sigma$. (Note that we can answer all the $\textsc{Child}(u,\alpha)$ queries off-line in time $\cO(n)$ in total for integer alphabets.)
It is clear that this procedure can be implemented in time $\mathcal{O}(n)$ in total (lines $7-19$).

Then, while on node $v$ and based on Fact~\ref{subtree}, the algorithm considers for every outgoing edge $(v,q)$, the implicit nodes along this edge that correspond to words (potential $w_p$'s) whose proper longest suffix (the respective $w_i$) is represented by an explicit node in $\mathcal{T}(x)$. 

Hence, when $\mathcal{D}(q)-\mathcal{D}(v)>1$ the algorithm follows the suffix-link from node $q$ to node $z$. It then checks whether $\textsc{Parent}(z)=u$. If not, then the word $\mathcal{L}(q)[0 \dd \mathcal{D}(\textsc{Parent}(z))]$ is represented by an implicit node along the edge $(v,q)$ and hence $\mathcal{L}(q)[0 \dd \mathcal{D}(\textsc{Parent}(z))+1]$ has to be checked as a potential $\rho$-overabundant word. After the check is completed, the algorithm sets $z=\textsc{Parent}(z)$ and iterates until $\textsc{Parent}(z)=u$. This is illustrated in Figure~\ref{fig:algo}. By Theorem~\ref{revbound}, the $\textsc{Parent}(z)=u$ check will fail $\mathcal{O}(n)$ times in total. All other operations take time $\mathcal{O}(1)$ and hence this procedure takes time $\mathcal{O}(n)$ in total (lines $20-37$).

\begin{figure}[!t]
\begin{center}
\begin{adjustbox}{max width=0.99\textwidth}

\begin{tikzpicture}
\node[shape=circle,draw=black] (0) at (5,0) {};
\node[shape=circle,draw=white] (1) at (2,-1) {};
\node[shape=circle,draw=black, label = {left: node $v$}] (2) at (3,-3) {}; 
\node[shape=circle,draw=black, label = {right:  $u=\textit{suffix-link}[v]$}] (3) at (9,-3) {};  
\node[shape=circle,draw=white] (4) at (10,-1) {};

\draw [->,dashed,draw=red] (2) -- (3);

\node[shape=circle, ,minimum size=0.01cm,draw=black,fill=black] (A) at (3.5,-4) {};
\node[shape=circle, ,minimum size=0.1cm,draw=black,fill=black] (B) at (4,-5) {};
\node[shape=circle, ,minimum size=0.1cm,draw=black,fill=black] (C) at (4.5,-6) {};

\draw [->,decorate,decoration=snake] (0) -- (1);
\draw [->,decorate,decoration=snake] (0) -- (2);
\draw [->,decorate,decoration=snake] (0) -- (3);
\draw [->,decorate,decoration=snake] (0) -- (4);

\draw [decorate] (2) -- (A);
\draw [decorate] (A) -- (B);
\draw [decorate] (B) -- (C);
\node[shape=circle,draw=black] (5) at (1,-7) {};
\node[shape=circle,draw=black] (6) at (2,-7) {};
\node[shape=circle,draw=black,label= { left: \small{$q=\textsc{Child}(v, \alpha)$}}] (7) at (5,-7) {};
\draw [->,decorate] (2) -- (5);
\draw [->,decorate] (2) -- (6);

\draw [->,decorate] (C) -- (7);
\node[shape=circle,draw=white] (51) at (0.5,-8) {};
\node[shape=circle,draw=white] (52) at (1,-8) {};
\node[shape=circle,draw=white] (53) at (1.5,-8) {};
\draw [->,decorate] (5) -- (51);
\draw [->,decorate] (5) -- (52);
\draw [->,decorate] (5) -- (53);
\node[shape=circle,draw=white] (61) at (1.5,-8) {};
\node[shape=circle,draw=white] (62) at (2,-8) {};
\node[shape=circle,draw=white] (63) at (2.5,-8) {};
\draw [->,decorate] (6) -- (61);
\draw [->,decorate] (6) -- (62);
\draw [->,decorate] (6) -- (63);
\node[shape=circle,draw=white] (71) at (4.5,-8) {};
\node[shape=circle,draw=white] (72) at (5,-8) {};
\node[shape=circle,draw=white] (73) at (5.5,-8) {};
\draw [->,decorate] (7) -- (71);
\draw [->,decorate] (7) -- (72);
\draw [->,decorate] (7) -- (73);


\node[shape=circle,draw=white] (8) at (8.5,-3.5) {};
\node[shape=circle,draw=black] (8a) at (8,-4) {};
\node[shape=circle,draw=white] (8b) at (9.5,-3.5) {};
\node[shape=circle,draw=white] (8c) at (10,-3.5) {};
\node[shape=circle,draw=white] (8d) at (10.5,-3.5) {};
\draw [->,decorate] (3) -- (8a);
\draw [->,decorate] (3) -- (8b);
\draw [->,decorate] (3) -- (8c);
\draw [->,decorate] (3) -- (8d);
\node[shape=circle,draw=white] (9) at (8.5,-4.5) {};
\node[shape=circle,draw=black] (9a) at (8,-5) {};
\node[shape=circle,draw=white] (9b) at (9.5,-4.5) {};
\draw [->,decorate] (8a) -- (9);
\draw [->,decorate] (8a) -- (9a);
\draw [->,decorate] (8a) -- (9b);

\node[shape=circle,draw=white] (10) at (8.5,-5.5) {};
\node[shape=circle,draw=black] (10a) at (8,-6) {};
\node[shape=circle,draw=white] (10b) at (9.5,-5.5) {};
\draw [->,decorate] (9a) -- (10);
\draw [->,decorate] (9a) -- (10a);
\draw [->,decorate] (9a) -- (10b);

\node[shape=circle,draw=white] (11) at (8.5,-6.5) {};
\node[shape=circle,draw=black, label= { right: \small{$z=\textit{suffix-link}[q]$}}] (11a) at (8,-7) {};
\node[shape=circle,draw=white] (11b) at (9.5,-6.5) {};
\draw [->,decorate] (10a) -- (11);
\draw [->,decorate] (10a) -- (11a);
\draw [->,decorate] (10a) -- (11b);

\node[shape=circle,draw=white] (12) at (7.5,-8) {};
\node[shape=circle,draw=white] (12a) at (8,-8) {};
\node[shape=circle,draw=white] (12b) at (8.5,-8) {};
\draw [->,decorate] (11a) -- (12);
\draw [->,decorate] (11a) -- (12a);
\draw [->,decorate] (11a) -- (12b);

\draw [->,dashed,draw=red,thick] (2) -- (3);
\draw [->,dashed,draw=red,thick] (7) -- (11a);
\draw [->,dotted,draw=red,decoration=zigzag] (A) -- (8a);
\draw [->,dotted,draw=red,decoration=zigzag] (B) -- (9a);
\draw [->,dotted,draw=red,decoration=zigzag] (C) -- (10a);

\path [->,green,bend left]   (11a) edge (10a);
\path [->,green,bend left]   (10a) edge (9a);
\path [->,green,bend left]   (9a) edge (8a);
\path [->,green,bend left]   (8a) edge (3);
\end{tikzpicture}
\begin{tikzpicture}
\node[shape=circle,draw=black] (0) at (5,0) {};
\node[shape=circle,draw=white] (1) at (2,-1) {};
\node[shape=circle,draw=black, label = {left: node $v$}] (2) at (3,-3) {}; 
\node[shape=circle,draw=black, label = {right:  $u=\textit{suffix-link}[v]$}] (3) at (9,-3) {};  
\node[shape=circle,draw=white] (4) at (10,-1) {};

\draw [->,dashed,draw=red] (2) -- (3);

\node[shape=circle, ,minimum size=0.01cm,draw=black,fill=black] (A) at (3.5,-4) {};
\node[shape=circle, ,minimum size=0.1cm,draw=black,fill=black] (B) at (4,-5) {};
\node[shape=circle, ,minimum size=0.1cm,draw=black,fill=black] (C) at (4.5,-6) {};

\draw [->,decorate,decoration=snake] (0) -- (1);
\draw [->,decorate,decoration=snake] (0) -- (2);
\draw [->,decorate,decoration=snake] (0) -- (3);
\draw [->,decorate,decoration=snake] (0) -- (4);

\draw [decorate] (2) -- (A);
\draw [decorate] (A) -- (B);
\draw [decorate] (B) -- (C);
\node[shape=circle,draw=white] (5) at (1,-4) {};
\node[shape=circle,draw=white] (6) at (2.5,-4) {};
\node[shape=circle,draw=black,double, label= { left: \small{$q=\textsc{Child}(v, \alpha)$, $\textit{label}[q]=i$}}] (7) at (5,-7) {};
\draw [->,decorate] (2) -- (5);
\draw [->,decorate] (2) -- (6);
\draw [->,decorate] (C) -- (7);


\node[shape=circle,draw=black] (8a) at (8,-4) {};
\node[shape=circle,draw=white] (8) at (8.5,-3.5) {};
\node[shape=circle,draw=white] (8b) at (9,-3.5) {};
\node[shape=circle,draw=white] (8c) at (9.5,-3.5) {};
\node[shape=circle,draw=white] (86) at (10,-3.5) {};
\draw [->,decorate] (3) -- (8a);
\draw [->,decorate] (3) -- (8b);
\draw [->,decorate] (3) -- (8c);
\draw [->,decorate] (3) -- (8d);
\node[shape=circle,draw=white] (9) at (8.5,-4.5) {};
\node[shape=circle,draw=black] (9a) at (8,-5) {};
\node[shape=circle,draw=white] (9b) at (9.5,-4.5) {};
\draw [->,decorate] (8a) -- (9);
\draw [->,decorate] (8a) -- (9a);
\draw [->,decorate] (8a) -- (9b);

\node[shape=circle,draw=white] (10) at (8.5,-5.5) {};
\node[shape=circle,draw=black] (10a) at (8,-6) {};
\node[shape=circle,draw=white] (10b) at (9.5,-5.5) {};
\draw [->,decorate] (9a) -- (10);
\draw [->,decorate] (9a) -- (10a);
\draw [->,decorate] (9a) -- (10b);

\node[shape=circle,draw=white] (11) at (8.5,-6.5) {};
\node[shape=circle,draw=black, double, label= { right: \small{$\textit{label}[z]=i+1$}}] (11a) at (8,-7) {};
\node[shape=circle,draw=white] (11b) at (9.5,-6.5) {};
\draw [->,decorate] (10a) -- (11);
\draw [->,decorate] (10a) -- (11a);
\draw [->,decorate] (10a) -- (11b);

\draw [->,dashed,draw=red,thick] (2) -- (3);
\draw [->,dotted,draw=red,decoration=zigzag] (A) -- (8a);
\draw [->,dotted,draw=red,decoration=zigzag] (B) -- (9a);
\draw [->,dotted,draw=red,decoration=zigzag] (C) -- (10a);

\path [->,green,bend left]   (11a) edge (10a);
\path [->,green,bend left]   (10a) edge (9a);
\path [->,green,bend left]   (9a) edge (8a);
\path [->,green,bend left]   (8a) edge (3);
\end{tikzpicture}

\end{adjustbox}
\end{center}
\caption{The above figures illustrate the nodes (implicit or explicit) considered in a step (lines 6-37) of Algorithm~\ref{algo:1}. The figure on the left presents the case where $\textsc{Child}(v, \alpha)$ is an internal node, while the right one the case that it is a leaf. Black nodes represent implicit nodes along the edge $(v,q)$ that we have to consider as potential $w_p$, and the red dotted line joins them with the respective (white) explicit node that represents the longest suffix of this $w_p$, i.e.~$w_i$.}
\label{fig:algo}
\end{figure}
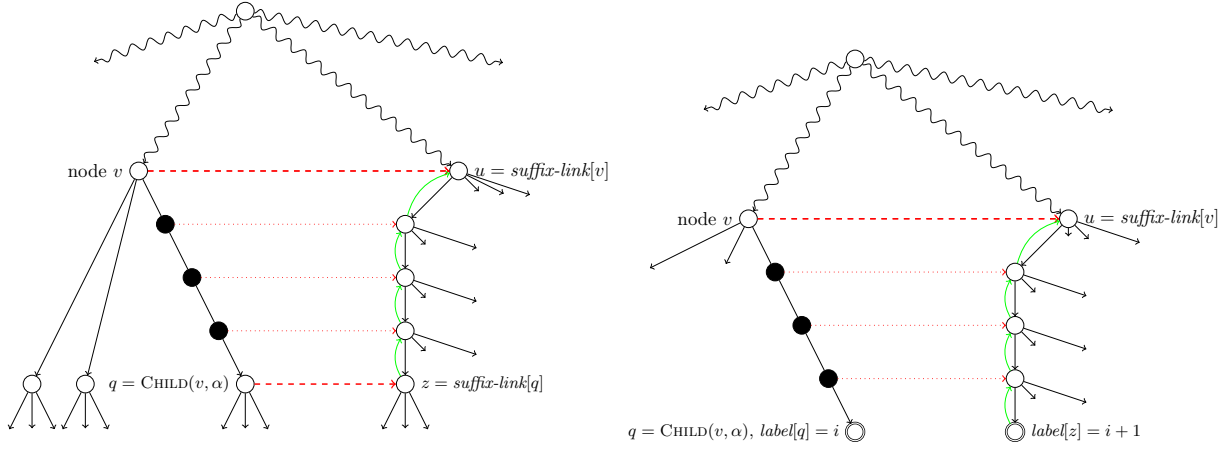

We formalise this procedure in Algorithm~\ref{algo:1}, where we assume that the suffix tree of $x\$$ is built, where $\$$ is a special letter, $\$ \notin \Sigma$. This forces all terminal nodes in $\mathcal{T}(x)$ to be leaf nodes. We thus obtain the following result; optimality follows directly from Lemma~\ref{lem:opt}.

\begin{theorem}
Algorithm~\ref{algo:1} solves problem \textsc{AllOverabundantWordsComputation} in time and space $\mathcal{O}(n)$, and this is time-optimal.
\end{theorem}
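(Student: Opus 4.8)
The plan is to establish correctness, then the $\cO(n)$ time and space bounds, and finally to invoke Lemma~\ref{lem:opt} for optimality. Throughout I would work with the suffix tree of $x\$$, where $\$ \notin \Sigma$, so that every terminal node is a leaf and hence, for every internal (necessarily $\$$-free) explicit node $v$, the value $\mathcal{C}(v)$ computed in lines $2$--$6$ equals the number of occurrences $f(\mathcal{L}(v))$ of $\mathcal{L}(v)$ in $x$; the same value, read off the explicit node immediately below an implicit node, gives the frequency of that implicit node's label. By Fact~\ref{abs} and Lemma~\ref{expl}, a word can be $\rho$-overabundant only if it is a factor of $x$ of the form $w=aw_ib$ with $a,b\in\Sigma$ and $w_i\neq\varepsilon$ the label of an explicit node of $\mathcal{T}(x)$; since an explicit node of $\mathcal{T}(x)$ remains explicit in $\mathcal{T}(x\$)$, no candidate is lost by working in $\mathcal{T}(x\$)$. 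It therefore suffices to show that Algorithm~\ref{algo:1} enumerates every such $w$ exactly once, computes $f(w)$, $f(w_p)$, $f(w_s)$ and $f(w_i)$ correctly, and reports $w$ precisely when the test derived from \eqref{eq:2} is met.

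For correctness I would split the candidates $w=aw_ib$ according to whether $w_p=aw_i$ is explicit or implicit --- a genuine dichotomy, so no word is treated twice, and within each case distinct words arise from distinct (explicit or implicit) nodes $w_p$, each examined in a unique capacity. The explicit case is handled in lines $7$--$19$: with $v$ the node of $w_p$ (so $\mathcal{D}(v)>1$ and $v$ internal), the suffix-link reaches $u=w_i$, each child $y$ of $v$ other than the one reached by a $\$$-labelled edge contributes the single candidate $w=\mathcal{L}(v)\cdot b$ where $b$ is the first letter of the edge from $v$ to $y$, and one verifies $f(w)=\mathcal{C}(y)$, $f(w_p)=\mathcal{C}(v)$, $f(w_i)=\mathcal{C}(u)$ and $f(w_s)=\mathcal{C}(\textsc{Child}(u,b))$; the guard $f_i>f_p$ only discards words for which $f(w)\le f(w_s)=E(w)$, i.e.\ with $\std(w)\le 0$. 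The implicit case is handled in lines $20$--$37$ via Fact~\ref{subtree}: along an edge $(v,q)$ of length greater than $1$, the implicit nodes $w_p$ on it whose longest proper suffix $w_i$ is explicit correspond exactly to the explicit ancestors of the suffix-link target $z$ of $q$ that lie strictly below $u=\textit{suffix-link}[v]$ (and automatically have word-depth at most $\mathcal{D}(q)-2$), and the $\textsc{Parent}$-walk from $z$ that stops once $\textsc{Parent}(z)=u$ visits precisely those nodes, at each step correctly reading $f(w)=f(w_p)=\mathcal{C}(q)$, $f(w_i)=\mathcal{C}(\textsc{Parent}(z))$ and $f(w_s)=\mathcal{C}(z)$; when $q$ is a leaf with $\textit{label}[q]=i$, the entry $\textit{node}[i+1]$ plays the role of $z$, and the correction $z\leftarrow\textsc{Parent}(z)$ triggered by $\mathcal{D}(z)=\mathcal{D}(\textsc{Parent}(z))+1$ is exactly what drops the one spurious candidate whose extending letter would be $\$$ rather than a letter of $\Sigma$.

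For the running time, $\mathcal{T}(x\$)$ is built in $\cO(n)$ time and space over an integer alphabet~\cite{farach1997optimal}, and a single depth-first traversal computes all $\mathcal{D}(\cdot)$ and $\mathcal{C}(\cdot)$ values; the arrays $\textit{suffix-link}$, $\textit{node}$, $\textit{label}$ and the parent pointers are all linear-size. The $\textsc{Child}(u,\alpha)$ queries of the first loop form a single off-line batch answered in $\cO(n)$ time via radix sort, and that loop does $\cO(1)$ work per (node, child) pair, hence $\cO(\sum_v deg(v))=\cO(n)$ overall. In the second loop each child is initialised in $\cO(1)$, and each pass through the body of the \textbf{while} loop corresponds to a distinct factor $aw_ib$ of $x$ with $w_i$ the label of an explicit node and $w_p$ implicit; by Theorem~\ref{revbound} there are at most $3n-2-2\sigma_x$ such factors, while the terminating test $\textsc{Parent}(z)=u$ is executed at most once per child, hence $\cO(n)$ times in total. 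All remaining operations are $\cO(1)$, so the algorithm runs in $\cO(n)$ time, and its space stays $\cO(n)$ provided each reported word is emitted implicitly, as a tree node together with a length (equivalently, a position and length in $x$). Optimality then follows at once from Lemma~\ref{lem:opt}: there are binary words of length $n$ with $2n-6=\Omega(n)$ $\rho$-overabundant words, so no correct algorithm can run in $o(n)$ time --- it must at least read its input and write its output.

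I expect the decisive step to be the correctness of the second loop. Pinning down which implicit nodes $w_p$ on a fixed edge $(v,q)$ have their longest proper suffix at an explicit node, proving that the $\textsc{Parent}$-walk from $\textit{suffix-link}[q]$ down to $u$ visits precisely the corresponding $w_i$-nodes --- neither overshooting into ancestors of $u$ nor stopping short, and respecting the word-depth window $\mathcal{D}(v)\le\mathcal{D}(w_i)\le\mathcal{D}(q)-2$ --- and checking that in the leaf case $\textit{node}[i+1]$ with the single parent correction reproduces exactly the $\$$-free portion of this list, all call for a careful though elementary traversal of the suffix-link and parent relations of the suffix tree. Once that structural claim is in place, the frequency identities, the amortised counting through Theorem~\ref{revbound}, and the optimality argument are routine.
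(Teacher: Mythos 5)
Your proposal is correct and takes essentially the same approach as the paper: candidates restricted via Fact~\ref{abs} and Lemma~\ref{expl} to factors $aw_ib$ with $w_i$ explicit, the explicit/implicit split for $w_p$ handled by suffix-links and Fact~\ref{subtree}, the parent-walk amortised through Theorem~\ref{revbound}, off-line $\textsc{Child}$ queries for integer alphabets, and time-optimality from Lemma~\ref{lem:opt}. You merely spell out a few correctness details (the $f_i>f_p$ guard, the $\$$-edge exclusion, and the leaf-case correction via $\textit{node}[i+1]$) that the paper treats only implicitly.
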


\section{Experimental Results: Effectiveness, Efficiency, and Applications}\label{sec:exp}

Algorithm~\ref{algo:1} was implemented as a program to compute the $\rho$-overabundant words in one or more input sequences. The program was implemented in the \texttt{C++} programming language. 
Our program makes use of the implementation of the {\em compressed suffix tree} available in the Succinct Data Structure Library~\cite{gbmp2014sea}. The input parameters are a (Multi)FASTA file with the input sequence(s) and a real number $\rho > 0$. The output is a file with the set of $\rho$-overabundant words per input sequence. The implementation is distributed under the GNU General Public License, and it is available at \url{http://github.com/solonas13/aw}. The experiments were conducted on a Desktop PC using one core of Intel Core i5-4690 CPU at 3.50GHz. 
The program was compiled with \texttt{g++} version 4.8.4 at optimisation level 3 (-O3). We also implemented a brute-force approach to confirm the correctness of our implementation. 

{\it Experiment I. (Effectiveness)} In the first experiment, our task was to establish the effectiveness of the statistical model in identifying overabundant words. To this end, we generated 25 random sequences of length $n=80,000$ over the DNA alphabet $\Sigma=\{\texttt{A}, \texttt{C}, \texttt{G}, \texttt{T}\}$ (uniform distribution). Then for each of these sequences, we inserted a random word $w$ of length $m=6$ in $t$ random positions. We varied the value of $t$ based on the fact that in a random sequence of length $n$ over an alphabet of size $\sigma=|\Sigma|$, where letters are independent, identically uniformly distributed random variables, a specific word of length $m$ is expected to occur roughly $r=n/\sigma^m$ times. We hence considered $t$ equal to $r$, $2r$, $4r$, $8r$, and $16r$. We then ran our program for each resulting sequence to identify the $\rho$-overabundant words with $\rho=0.000001$, and output the deviation of the inserted word $w$, as well as the word $w_{\max}$ with the maximum deviation. The inserted word $w$ was reported as a $\rho$-overabundant word in \emph{all} cases. Furthermore, in many cases the word with the maximum deviation was $w$ itself and in many other cases one of its factors; this was true in {\em all} cases for $t \geq 80 \approx 4r$. Hence, the model is effective in identifying words that are overabundant. The full results of this experiment are presented in Table~\ref{tab:effectiveness}. 

\begin{table}[!t]
	\begin{center}
        \begin{tabular}{|c|c|c|c|c|c|}
\hline
Times $t$ of inserting $w$ & 20 & 40 & 80 & 160 & 320 \\ \hline
$w$ & \texttt{TTACAA} &  \cellcolor{green!20} \texttt{GTGCCC} & \cellcolor{green!20} \texttt{CACTTT} & \cellcolor{green!20} \texttt{AGTTAC} & \cellcolor{green!20} \texttt{AAACAG} \\ 

$\std(w)$ & \texttt{2.233313} & \cellcolor{green!20} \texttt{4.143015} & \cellcolor{green!20} \texttt{5.623615} & \cellcolor{green!20} \texttt{6.010327} & \cellcolor{green!20} \texttt{5.674220} \\ \hdashline
                            
$w_{\max}$ & \texttt{CTCCTATG} & \cellcolor{green!20} \texttt{GTGCCC} & \cellcolor{green!20} \texttt{CACTTT} & \cellcolor{green!20} \texttt{AGTTA} & \cellcolor{green!20} \texttt{ACAG} \\ 
                             
$\std(w_{\max})$ &  \texttt{3.354102} & \cellcolor{green!20} \texttt{4.143015} & \cellcolor{green!20} \texttt{5.623615} & \cellcolor{green!20} \texttt{6.900740} & \cellcolor{green!20}\texttt{9.617803} \\ \hline

$w$ & \texttt{AATCTG} & \texttt{AGTCGA} & \cellcolor{green!20}\texttt{GAAGTC} & \cellcolor{green!20} \texttt{TATCTT} & \cellcolor{green!20} \texttt{CAAAAA} \\  

$\std(w)$ & \texttt{2.034233} & \texttt{2.888529} & \cellcolor{green!20} \texttt{4.456468} & \cellcolor{green!20} \texttt{5.073860} & \cellcolor{green!20} \texttt{11.071170} \\ \hdashline
                            
$w_{\max}$ & \texttt{ATTGGGG} & \texttt{TCTGTATG} & \cellcolor{green!20} \texttt{GAAGTC} & \cellcolor{green!20} \texttt{ATCTT} & \cellcolor{green!20} \texttt{CAAAAA} \\  
                              
$\std(w_{\max})$ &  \texttt{3.265609} & \texttt{3.272727} & \cellcolor{green!20} \texttt{4.456468} & \cellcolor{green!20} \texttt{6.115612} & \cellcolor{green!20} \texttt{11.071170} \\ \hline

$w$ & \texttt{GTACCA} & \texttt{GGCGTG} & \cellcolor{green!20} \texttt{AAGGAT} & \cellcolor{green!20} \texttt{GGGTCC} & \cellcolor{green!20} \texttt{TTCCGG} \\ 

$\std(w)$ & \texttt{2.187170} & \texttt{3.658060} & \cellcolor{green!20} \texttt{4.428189} & \cellcolor{green!20} \texttt{5.467296} & \cellcolor{green!20} \texttt{5.256409} \\ \hdashline
                            
$w_{\max}$ & \texttt{TCTGTGCG} & \texttt{ACGATACC} & \cellcolor{green!20} \texttt{AAGGAT} & \cellcolor{green!20} \texttt{GGTCC} & \cellcolor{green!20} \texttt{TTCCG} \\ 
                              
$\std(w_{\max})$ &  \texttt{3.548977} & \texttt{4.000000} & \cellcolor{green!20} \texttt{4.428189} & \cellcolor{green!20} \texttt{6.787771} & \cellcolor{green!20} \texttt{9.105009} \\ \hline

$w$ & \texttt{CCATAG} & \texttt{GTTGAT} & \cellcolor{green!20} \texttt{TGAGCG} & \cellcolor{green!20} \texttt{ACATTT} & \cellcolor{green!20} \texttt{CTTGTA} \\ 

$\std(w)$ & \texttt{2.470681} & \texttt{2.467858} & \cellcolor{green!20} \texttt{4.214544} & \cellcolor{green!20} \texttt{5.755475} & \cellcolor{green!20} \texttt{5.362435} \\ \hdashline
                            
$w_{\max}$ & \texttt{CAGTGGTC} & \texttt{TTTTCCT} & \cellcolor{green!20} \texttt{TGAGC} & \cellcolor{green!20} \texttt{ACATT} & \cellcolor{green!20} \texttt{TTGTA} \\ 
                              
$\std(w_{\max})$ &  \texttt{3.333333} & \texttt{3.368226} & \cellcolor{green!20} \texttt{5.072968} & \cellcolor{green!20} \texttt{6.376277} & \cellcolor{green!20} \texttt{9.467110} \\ \hline

$w$ & \texttt{TCGACA} & \texttt{CGCTTT} & \cellcolor{green!20} \texttt{TACAAC} & \cellcolor{green!20} \texttt{TATTAG} & \cellcolor{green!20} \texttt{TGAGAT} \\  

$\std(w)$ & \texttt{1.531083} & \texttt{2.789220} & \cellcolor{green!20} \texttt{3.552902} & \cellcolor{green!20} \texttt{4.959926} & \cellcolor{green!20} \texttt{5.124976} \\ \hdashline
                            
$w_{\max}$ & \texttt{CTTTGCT} & \texttt{ATTACC} & \cellcolor{green!20} \texttt{ACAAC} & \cellcolor{green!20} \texttt{ATTAG} & \cellcolor{green!20} \texttt{GACAT} \\ 
                             
$\std(w_{\max})$ &  \texttt{3.308195} & \texttt{3.322163} & \cellcolor{green!20} \texttt{5.653479} & \cellcolor{green!20} \texttt{6.837628} & \cellcolor{green!20} \texttt{10.012316} \\ \hline
\end{tabular}
\newline
\linebreak
    \caption{The deviation of the randomly generated inserted word $w$, as well as the word $w_{\max}$ with the maximum deviation. The length of each of the 25 randomly generated sequences over $\Sigma=\{\texttt{A}, \texttt{C}, \texttt{G}, \texttt{T} \}$ was $n=80,000$, the length of $w$ was $m=6$, and $\rho=0.000001$. In green are the cases when the word with the maximum deviation was $w$ itself or one of its factors.}
	\label{tab:effectiveness}
	\end{center}
\end{table} 

{\it Experiment II. (Efficiency)} Our task here was to establish the fact that the elapsed time of the implementation grows linearly with $n$, the length of the input sequence.  As input datasets, for this experiment, we used synthetic DNA ($\sigma=4$) and proteins ($\sigma=20$) sequences ranging from $1$ to $128$ M (Million letters). For each sequence we used a constant value of $\rho=10$. The results are plotted in Fig.~\ref{fig:linear}. It becomes evident from the results that the elapsed time of the program grows linearly with $n$. The longer time required for the proteins sequences compared to the DNA sequences for increasing $n$ is explained by the dependence of the time required to answer queries of the from $\textsc{Child}(v,\alpha)$ on the size of the alphabet ($\sigma=20$ {\em vs}.~$\sigma=4$) in the implementation of the compressed suffix tree we used.
\begin{figure}[!t]
	\begin{center}
    \includegraphics[width=0.44\textwidth,angle=270]{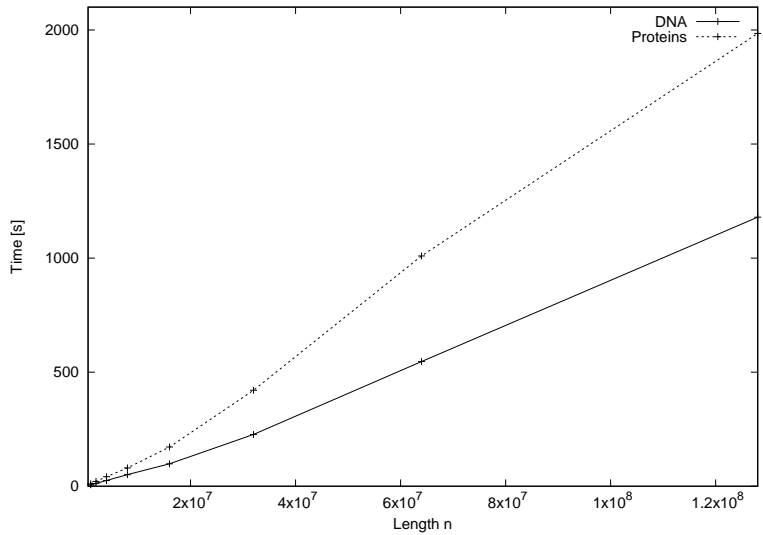}
	\caption{Elapsed time of Algorithm~\ref{algo:1} using synthetic DNA ($\sigma=4$) and proteins ($\sigma=20$) sequences of length $1$M to $128$M.}
	\label{fig:linear}
	\end{center}
\end{figure}

{\it Experiment III. (Real Application)} Here we proceed to the examination of seven collections of Conserved Non-coding Elements (CNEs) obtained through multiple sequence alignment between the human and other genomes. Despite being located at the non-coding part of genomes, CNEs can be extremely conserved on the sequence level across organisms. Their genesis, functions and evolutionary dynamics still remain enigmatic~\cite{polychronopoulos2014conserved,harmston2013mystery}. The detailed description of how those CNEs were identified can be found in~\cite{Polychronopoulos201479}. For each CNE of these datasets, a sequence stretch (surrogate sequence) of non-coding DNA of equal length and equal \texttt{GC} content was taken at random from the repeat-masked human genome. The CNEs of each collection were concatenated into a single long sequence and the same procedure was followed for the corresponding surrogates. We have determined through the proposed algorithm the overabundant words for $k = 10$ (decamers) and $\rho = 3$ for these fourteen datasets and the results are presented in Table~\ref{tab:Exp3-a}. Likewise, in Table~\ref{tab:Exp3-b}, we show all overabundant words (i.e.~$k > 2$) for $\rho = 3$.

\begin{table}[!t]
	\begin{center}
        \begin{tabular}{|c|c|c|c|c|c||c|c|}
\hline
$k=10,$	&~CNEs~ 	&~CNEs~ 	&~CNEs~ &	~CNEs~ &	~CNEs~ &	\small{Mammalian} &	\small{Amniotic}\\ 
$\rho=3$	&75-80	&80-85	&85-90&	90-95&	95-100&	&	\\ \hline
Surr	&1,144	&718	&473	&297	&469	&15,470	&2,874\\ \hline
CNEs&	331	&181&	100&	59&	71&	491&	149\\ \hline
 \bf{Ratio}	&\bf{3.46}&	\bf{3.97}&	\bf{4.73}&	\bf{5.03}&	\bf{6.61}&	\bf{31.51}&	\bf{19.29}\\ \hline
\end{tabular}
\newline
\linebreak
    \caption{Number of overabundant words for $k = 10$ and  $\rho = 3$.}
	\label{tab:Exp3-a}
\end{center}
\end{table}

\begin{table}[!t]
	\begin{center}
        \begin{tabular}{|c|c|c|c|c|c||c|c|}
\hline
$k >2,$	&~CNEs~ 	&~CNEs~ 	&~CNEs~ &	~CNEs~ &	~CNEs~ &	\small{Mammalian} &	\small{Amniotic}\\ 
$\rho=3$	&75-80	&80-85	&85-90&	90-95&	95-100&	&	\\ \hline
Surr	&5,925	&3,798	&2,770	&1,948	&2,405	&69,022	&12,913\\ \hline
CNEs&	1,373	&778	&512	&390	&403	&7,549&	1,401\\ \hline
 \bf{Ratio}	&\bf{4.32} &	\bf{4.88}&	\bf{5.41}&	\bf{4.99} &\bf{5.97}&	\bf{9.14}&\bf{9.22}\\ \hline
\end{tabular}
\newline
\linebreak
    \caption{Number of overabundant words for $k > 2$ and  $\rho = 3$.}
	\label{tab:Exp3-b}
\end{center}
\end{table}

The first five CNE collections have been composed through multiple sequence alignment of the same set of genomes (human vs. chicken; mapped on the human genome) and they differ only in the thresholds of sequence similarity applied between the considered genomes: from 75\% to 80\% (the least conserved CNEs, which thus are expected to serve less demanding functional roles) to 95–100\% which represent the extremely conserved non-coding elements (UCNEs or CNEs 95–100)~\cite{Polychronopoulos201479}. The remaining two collections have been composed under different constraints and have been derived after alignment of Mammalian and Amniotic genomes. In Tables~\ref{tab:Exp3-a} and~\ref{tab:Exp3-b}, the last line shows the ratios formed by the numbers of overabundant words of each concatenate of surrogates divided by the numbers of overabundant words of the corresponding CNE dataset. Two immediate results stem from inspection of Tables~\ref{tab:Exp3-a} and~\ref{tab:Exp3-b}:
\begin{enumerate}
\item In all cases, the number of overabundant words from the surrogate concatenate of sequences {\em far exceeds} the corresponding number derived from the CNE dataset.
\item In the case of datasets with increasing degree of similarity between aligned genomes (from 75-80 to 95-100), the ratios of the numbers of overabundant words show a clear, {\em increasing trend}.
\end{enumerate}

Both these findings can be understood on the basis of the difference in functionality between CNE and surrogate datasets. As we briefly describe in Section~\ref{sec:intro}, this systematic difference (finding 1 above) is expected on the basis of the self-enhancing elongation of relatively long homonucleotide tracts~\cite{Hile2004745,Levinson01051987}, which occurs mainly in the non-constrained parts of the genome, here the surrogate datasets. Moreover, finding 2 corroborates the proposed mechanism of overabundance, as in CNE datasets 1-5 depletion in overabundant words quantitatively follows the degree of sequence conservation. Inspection of the individual overabundant words found in the surrogate datasets verifies that they largely consist of short repeats of the types described in~\cite{Hile2004745} and in~\cite{Levinson01051987}. There is an analogy of this finding with a corresponding one, concerning the occurrence of avoided words in the same sequence sets, which was described in~\cite{AMOB2017}.

\section{Final Remarks}\label{sec:conc}
By Lemmas~\ref{lem:3n} and~\ref{lem:opt}, we know that the maximum number $\textsf{OW}(n,\sigma)$ of overabundant words in any sequence of length $n$ over an alphabet of size $\sigma>1$ lies between $2n-6 \leq \textsf{OW}(n,\sigma) \leq 3n-2-2\sigma$. We have conducted computational experiments, and for $\sigma > 2$ we obtained sequences with more than $2n$ overabundant words. An open problem is to find $\textsf{OW}(n,\sigma)$.

\bibliographystyle{plain}
\bibliography{references}

\end{document}